\newtheorem{lem}{Lemma}
\newtheorem{proposition}{Proposition}
\newtheorem{thm}{Theorem}
\newcommand{\rev}[1]{{\color{blue}#1}} 
\newcommand{\com}[1]{\textbf{\color{red} (COMMENT: #1)}} 
\newcommand{\comg}[1]{\textbf{\color{green} (COMMENT: #1)}}
\newcommand{\response}[1]{\textbf{\color{magenta} (RESPONSE: #1)}} 
\newcommand{\rev}[1]{#1}
\newcommand{\com}[1]{}
\newcommand{\comg}[1]{}
\newcommand{\response}[1]{}
\begin{document}

\title{Adaptive Channel Recommendation For Opportunistic Spectrum Access}

\author{Xu Chen$^{\ast}$, Jianwei Huang$^{\ast}$, Husheng Li$^{\dagger}$\\$^{\ast}$Department of Information Engineering, The Chinese University of Hong Kong, Hong Kong\\
$^{\dagger}$Department of Electrical Engineering and Computer Science, The University of Tennessee Knoxville, TN, USA\\
email:\{cx008,jwhuang\}@ie.cuhk.edu.hk,husheng@eecs.utk.edu
}

\maketitle


\begin{abstract}
We propose a dynamic spectrum access scheme where secondary users cooperatively recommend \textquotedblleft{}good\textquotedblright{} channels to each other and access accordingly. We formulate the problem as an average reward based
Markov decision process. We show the existence of the optimal
stationary spectrum access policy, and explore its structure
properties in two asymptotic cases. Since the action space of the  Markov decision process is continuous, it is difficult to find the optimal policy by simply discretizing the action space and use the policy iteration, value
iteration, or Q-learning methods.  Instead, we propose a new
algorithm based on the Model Reference Adaptive Search method, and prove its convergence to the optimal policy. Numerical
results show that the proposed algorithms achieve up to $18\%$ and $100\%$ performance improvement than the static channel recommendation scheme in homogeneous and heterogeneous channel environments, respectively, and is more robust to channel dynamics. 
\end{abstract} 

\section{Introduction}
Cognitive radio technology enables unlicensed secondary wireless users to opportunistically share the spectrum with licensed primary users, and thus offers a promising solution to address the spectrum under-utilization problem \cite{key-6}. Designing an efficient spectrum access mechanism for cognitive radio networks, however, is challenging for several reasons: (1) \emph{time-variation}: spectrum opportunities available for secondary users are often time-varying due to primary users' stochastic activities \cite{key-6}; and (2) \emph{limited observations}: each secondary user often has a limited view of the spectrum opportunities due to the limited spectrum sensing capability \cite{key-7}. Several characteristics of the wireless channels, on the other hand, turn out to be useful for designing efficient spectrum access mechanisms: (1) \emph{temporal correlations}: spectrum availabilities are correlated in time, and thus observations in the past can be useful in the near future \cite{key-19}; and (2) \emph{spatial correlation}: secondary users close to one another may experience similar spectrum availabilities \cite{key-1}. In this paper, we shall explore the time and space correlations and propose a recommendation-based collaborative spectrum access algorithm, which achieves good communication performances for the secondary users.

Our algorithm design is directly inspired by the recommendation system in the electronic commerce industry. For example, existing owners of  various products can provide recommendations (reviews) on Amazon.com, so that other potential customers can pick the products that best suit their needs. Motivated by this, Li in \cite{key-2} proposed a static channel recommendation scheme, where secondary users recommend the channels they have
successfully accessed to nearby secondary users. Since each secondary user originally only has a limited view of spectrum availability, such information exchange enables secondary users to take advantages of the correlations in time and space, make more informed decisions, and achieve a high total transmission rate.

The recommendation scheme in \cite{key-2}, however, is rather static and does not dynamically change with network conditions. In particular, the static scheme ignores two important characteristics of cognitive radios. The first one is the \emph{time variability} we mentioned before. The second one is the \emph{congestion effect}. As depicted in Figure \ref{ChannelRec}, too many users accessing the same good channel leads to congestion and a reduced rate for everyone.

\begin{figure}[tt]
\begin{center}
\includegraphics[scale=0.5]{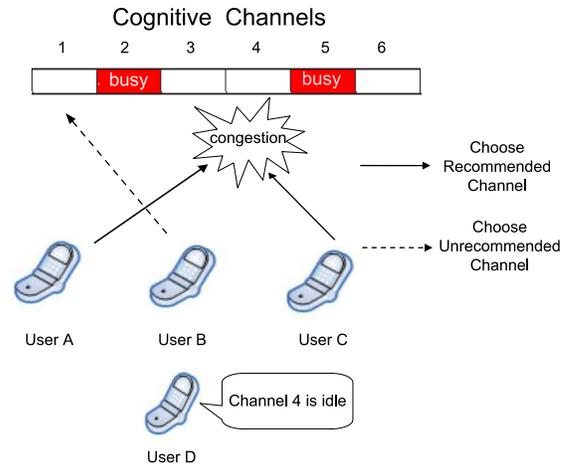}
\caption{\label{ChannelRec}Illustration of the channel recommendation scheme. User D recommends channel 4 to other users. As a result, both user A and user C access the same channel 4, and thus lead to congestion and a reduced rate for both users.}
\end{center}
\end{figure}

To address the shortcomings of the static recommendation scheme, in this paper we propose an adaptive channel recommendation
scheme, which adaptively changes the spectrum access probabilities based on users' latest channel recommendations. We formulate and analyze the system as a Markov decision process (MDP), and propose a numerical algorithm that always converges to the optimal spectrum access policy.

The main results and contributions of this paper  include:
\begin{itemize}
\item \emph{Markov decision process formulation}: we formulate and analyze the optimal recommendation-based spectrum access as an average reward MDP.
\item \emph{Existence and structure of the optimal policy}: we show that there always exists a stationary optimal spectrum access policy, which requires only the channel recommendation information of the most recent time slot. We also explicitly characterize the structure of the optimal stationary policy in two asymptotic cases (either the number of users or the number of users goes to infinity).
\item \emph{Novel algorithm for finding the optimal policy}: we propose
an algorithm based on the recently developed Model Reference Adaptive Search method \cite{key-5} to find the optimal stationary spectrum access policy.
The algorithm has a low complexity even when dealing with a continuous action space of the MDP. We also show that it always converges to the optimal stationary policy.
\item \rev{\emph{Superior Performance}: we show that the proposed algorithm achieves up to $18\%$ performance improvement than the static channel recommendation scheme  and $10\%$ performance improvement than the Q-learning method, and is also robust to channel dynamics.}

\end{itemize}

The rest of the paper is organized as follows. We introduce the system
model and the static channel recommendation scheme in Sections \ref{sec:System-Model} and \ref{sec:Static-Channel-Recommendation}, respectively. We then discuss the motivation for designing an adaptive channel recommendation scheme in Section \ref{sec:Motivation-Adaptive}. The Markov decision
process formulation and the structure results of the optimal policy
are presented in Section \ref{sec:Adaptive-Channel-Recommendation},
followed by the Model Reference Adaptive Search based algorithm in Section
\ref{sec:Model-Reference-Adaptive}. We illustrate the performance of the algorithm through numerical results in
Section \ref{sec:Numerical-Results}. We discuss the related
work in Section \ref{sec:Related-Work} and conclude in Section \ref{sec:Conclusion}. 

\section{System Model}\label{sec:System-Model}
\rev{
We consider a cognitive radio network with $M$ parallel and stochastically heterogeneous primary
channels. $N$ homogeneous secondary users try to access these channels using a slotted transmission structure (see Figure \ref{fig:Structure-of-Each}). The secondary users  can exchange information by broadcasting messages over a common control channel\footnote{Please refer to \cite{key-20} for the details on how to set up and maintain a reliable common control channel in cognitive radio networks.}. We assume that the secondary users are located close-by, thus they experience similar spectrum availabilities and can hear one another's broadcasting messages. To protect the primary transmissions, secondary users need to sense
the channel states before their data transmission.

The system model
is described as follows:
\begin{itemize}
\item \emph{Channel state:} For each primary channel $m$, the channel state at time slot $t$ is \[
S_{m}(t)=\begin{cases}
0, & \mbox{if channel $m$ is occupied by}\\
& \mbox{primary transmissions,}\\
1, & \mbox{if channel $m$ is idle.}
\end{cases}\]

\item \emph{Channel state transition:} The states of different channels change according to independent Markovian processes (see Figure
\ref{fig:Markovian-Channel-Model}). We denote the channel state probability
vector of channel $m$ at time $t$ as $
\boldsymbol{p}_{m}(t)\triangleq(Pr\{S_{m}(t)=0\},Pr\{S_{m}(t)=1\}),$
 which follows a two-state Markov chain as $
\boldsymbol{p}_{m}(t)=\boldsymbol{p}_{m}(t-1)\Gamma_{m},\forall t\geq 1,$
 with the transition matrix \[
\Gamma_{m}=\left[\begin{array}{cc}
1-p_{m} & p_{m}\\
q_{m} & 1-q_{m}\end{array}\right].\]
Note that when $p_{m}=0$ or $q_{m}=0$, the channel
state stays unchanged. In the rest of the paper, we will look at the more interesting and challenging cases where $0<p_{m}\leq1$ and $0<q_{m}\leq1$. The stationary distribution
of the Markov chain is given as \begin{eqnarray}
\lim_{t\rightarrow\infty}Pr\{S_{m}(t) & = & 0\}=\frac{q_{m}}{p_{m}+q_{m}},\label{eq:sd-1}\\
\lim_{t\rightarrow\infty}Pr\{S_{m}(t) & = & 1\}=\frac{p_{m}}{p_{m}+q_{m}}.\label{eq:sd-2}\end{eqnarray}
%

\item \emph{Heterogeneous channel throughput:} When a secondary user transmits successfully on an idle channel $m$, it achieves a data rate of $B_{m}$. Different channels can support different data rates.
\item \emph{Channel Contention:} To resolve the transmission collision when multiple secondary users access the same channel, a backoff mechanism is used (see Figure \ref{fig:Structure-of-Each} for illustration). The contention stage of a time slot is divided into $\lambda^{*}$ mini-slots, and each user $n$ executes the following two steps:
\begin{enumerate}
\item Count down according to a randomly and uniformly chosen integral backoff time (number of mini-slots) $\lambda_{n}$ between $1$ and $\lambda^{*}$.
\item Once the timer expires, monitor the channel and transmit RTS/CTS messages to grab the channel if the channel is clear (i.e., no ongoing transmission). Note that if multiple users choose the same backoff mini-slot, a collision will occur with RTS/CTS transmissions and no users can grab the channel. Once successfully grabing the channel, the user starts to transmit its data packet.
\end{enumerate}
Suppose that $k_{m}$ users choose channel $m$ to access. Then
the probability that user $n$ (out of the $k_{m}$
users) successfully grabs the channel $m$ is \begin{eqnarray}
Pr_{n} & = & Pr\{\min\{\lambda_{1},...,\lambda_{k_{m}}\}=\lambda_{n}\}\nonumber\\
 &  & \cdot \sum_{\lambda=1}^{\lambda^{*}}Pr\{\lambda_{n}=\lambda\}Pr\{\min_{i\neq n}\{\lambda_{i}\}>\lambda|\lambda_{n}=\lambda\}\nonumber\\
 & = & \frac{1}{k_{m}}\sum_{\lambda=1}^{\lambda^{*}}\frac{1}{\lambda^{*}}\left(\frac{\lambda^{*}-\lambda}{\lambda^{*}}\right)^{k_{m}-1}.\end{eqnarray}
For the ease of exposition, we focus on the asymptotic case where $\lambda^{*}$ goes to $\infty$. This is a good approximation when the number of mini-slots $\lambda^{*}$ for backoff is much larger than the number of users $N$ and collisions rarely occur. It simplifies the analysis as \begin{equation} \lim_{\lambda^{*}\rightarrow\infty}\frac{1}{\lambda^{*}}\sum_{\lambda=1}^{\lambda^{*}}(\frac{\lambda^{*}-\lambda}{\lambda^{*}})^{k_{m}-1}=1,\end{equation} and thus the expected throughput of user $n$ is \begin{equation}
u_{n}(t)=\frac{B_{m}S_{m}(t)}{k_{m}}.\label{eq:uu}\end{equation}
\end{itemize}
}

\begin{figure}[tt]
\begin{center}
\includegraphics[scale=0.6]{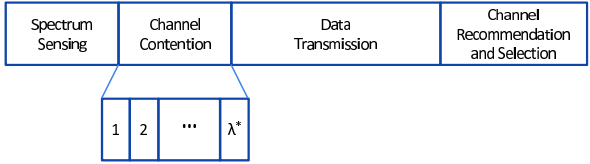}
\caption{\label{fig:Structure-of-Each}Structure of each spectrum access time
slot}
\end{center}
\end{figure}

\begin{figure}[tt]
\begin{center}
\includegraphics[scale=0.9]{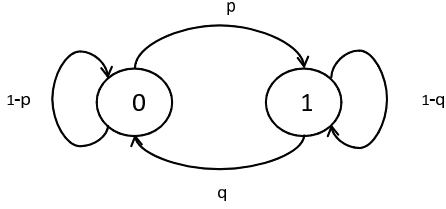}

\caption{\label{fig:Markovian-Channel-Model}Two states Markovian channel model}
\end{center}
\end{figure}


\section{\label{sec:Static-Channel-Recommendation1}Introduction To Channel Recommendation}
In this section, we first give a review of the static channel recommendation scheme in in  \cite{key-2} and then discuss the motivation for adaptive channel recommendation.
\subsection{\label{sec:Static-Channel-Recommendation}Review of Static Channel Recommendation}
The key idea of the static channel recommendation scheme is that secondary users inform each other about the available channels they have just accessed. More specifically, each secondary user executes the following four stages synchronously during each time slot (See Figure \ref{fig:Structure-of-Each}):
\begin{itemize}
\item \emph{Spectrum sensing:} sense one of the channels based on channel selection result made at the end of the previous time slot.
\item \emph{Channel Contention:} if the channel sensing result is idle, compete for the channel with the backoff mechanism described in Section \ref{sec:System-Model}.
\item \emph{Data transmission:}  transmit data packets if the user successfully grabs the channel.
\item \emph{Channel recommendation and selection:}
\begin{itemize}
\item \emph{Announce recommendation:} if the user has successfully accessed an idle channel, broadcast this channel ID to all other secondary users.
\item \emph{Collect recommendation:} collect recommendations
from other secondary users and store them in a buffer. Typically,
the correlation of channel availabilities between two slots diminishes as the time difference increases.
Therefore, each secondary user will only keep the recommendations received from the most recent $W$ slots and discard the out-of-date information. The user's own successful transmission history within $W$ recent time slots is also stored in the buffer. $W$ is a system design parameter and will be further discussed later.
\item \emph{Select channel}: choose a channel to sense at the next time slot by putting more weights on the recommended channels according to a \emph{static branching probability $P_{rec}$}. Suppose that the user has $0<R<M$ different channel recommendations in the buffer, then the probability of accessing
a channel $m$ is \begin{equation}
P_{m}=\begin{cases}
\frac{P_{rec}}{R}, & \mbox{if channel $m$ is recommended,}\\
\frac{1-P_{rec}}{M-R}, & \mbox{otherwise.}\end{cases}\label{eq:SCR-1}\end{equation}
 A larger value of $P_{rec}$ means that putting more weight on the recommended channels. When $R=0$ (no channel is recommended) or $M$ (all channels are recommended), the random access is used and the probability of selecting channel $m$ is $P_{m}=\frac{1}{M}$.
\end{itemize}
\end{itemize}

To illustrate the channel selection process, let us take the network in Figure \ref{ChannelRec} as an example. Suppose that the branching probability $P_{rec}=0.4$. Since only $R=1$ recommendation is available (i.e., channel 4), the probabilities of choosing the recommended channel 4 and any unrecommended channel are $\frac{0.4}{1}=0.4$ and $\frac{1-0.4}{6-1}=0.12$, respectively.

Numerical studies in \cite{key-2} showed that the static channel recommendation scheme achieves a higher performance over the traditional random channel access scheme without information exchange. However, the fixed value of $P_{rec}$ limits the performance of the static scheme, as explained next.

\subsection{\label{sec:Motivation-Adaptive}Motivations For Adaptive Channel Recommendation}
The static channel recommendation mechanism is simple to implement due to a fixed value of $P_{rec}$. However, it may lead to significant congestions when the number of recommended channels is small. In the extreme case when only $R=1$ channel is recommended, calculation (\ref{eq:SCR-1}) suggests that every user will access that channel with a probability $P_{rec}$. When the number of users $N$ is large, the expected number of users accessing this channel $NP_{rec}$ will be high. Thus heavy congestion happens and each secondary user will get a low expected throughput.

A better way is to adaptively change the value of $P_{rec}$ based on the number of recommended channels. This is the key idea of our proposed algorithm.
To illustrate the advantage of adaptive algorithms, let us first consider a simple heuristic  adaptive algorithm in a homogeneous channel environment, i.e., for each channel $m$, its data rate $B_{m}=B$ and channel state changing probabilities $p_{m}=p,q_{m}=q$. In this algorithm, we choose the branching probability such that the expected number of secondary users choosing a single recommended channel is one. To achieve this, we need to set $P_{rec}$ as in Lemma \ref{lemma12s}.
\begin{lem}\label{lemma12s}
If we choose the branching probability $P_{rec}=\frac{R}{N}$, then the expected number of secondary users choosing
any one of the $R$ recommended channels is one.
\end{lem}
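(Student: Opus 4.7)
The plan is a direct linearity-of-expectation argument based on the access probabilities given in equation (\ref{eq:SCR-1}). The situation is symmetric across the $R$ recommended channels, so it suffices to fix an arbitrary recommended channel $m$ and compute the expected number of users that select it.

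First I would introduce, for each user $n \in \{1,\dots,N\}$ and the fixed recommended channel $m$, an indicator random variable $X_{n,m}$ that equals $1$ if user $n$ chooses channel $m$ and $0$ otherwise. By equation (\ref{eq:SCR-1}), and since each user independently applies the same access rule with $R$ recommendations available, we have $\Pr\{X_{n,m}=1\} = P_{rec}/R$ for every $n$. Hence $\mathbb{E}[X_{n,m}] = P_{rec}/R$.

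Next I would let $K_m = \sum_{n=1}^{N} X_{n,m}$ denote the number of secondary users who pick channel $m$. By linearity of expectation,
\begin{equation}
\mathbb{E}[K_m] \;=\; \sum_{n=1}^{N} \mathbb{E}[X_{n,m}] \;=\; N\cdot \frac{P_{rec}}{R}.
\end{equation}
Substituting the prescribed branching probability $P_{rec} = R/N$ yields $\mathbb{E}[K_m] = 1$. Since $m$ was an arbitrary recommended channel, this holds for every one of the $R$ recommended channels, which is exactly the claim.

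There is essentially no obstacle here: the result is a one-line consequence of linearity of expectation once one writes down the per-user selection probability from (\ref{eq:SCR-1}). The only thing worth being careful about is the implicit assumption that each user independently applies the selection rule (so that $X_{n,m}$ are well-defined Bernoulli variables with the stated marginal), and that all $R$ users see the same recommendation set in the slot being analyzed — both of which follow from the protocol description in Section~\ref{sec:Static-Channel-Recommendation}. Note also that we do not need independence across users for the expectation computation, only the marginal probabilities, so the argument is robust to any correlations in the users' draws.
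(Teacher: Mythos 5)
Your proof is correct, but it takes a different and more elementary route than the paper. The paper's own argument (Appendix~\ref{ProofLemma2}) conditions on the event $\Lambda_{C}$ that exactly $C$ of the $N$ users branch to the recommended set, uses the fact that conditionally the loads $(c_{1},\dots,c_{R})$ are multinomial with uniform probabilities $1/R$ so that $E[c_{m}\mid\Lambda_{C}]=C/R$, and then averages over the binomial distribution of $C$ to get $E[c_{m}]=\sum_{C}\frac{C}{R}\binom{N}{C}P_{rec}^{C}(1-P_{rec})^{N-C}=\frac{P_{rec}N}{R}$. You instead decompose over users rather than over the count $C$: each user's marginal probability of landing on a fixed recommended channel $m$ is $P_{rec}/R$ directly from (\ref{eq:SCR-1}), so linearity of expectation gives $E[K_{m}]=N\,P_{rec}/R$ in one line. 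Both computations yield the same quantity, and setting $P_{rec}=R/N$ gives the claim. Your version buys simplicity and robustness — as you note, it needs only the per-user marginals and no independence or distributional structure across users — while the paper's conditioning argument additionally exhibits the full conditional (multinomial) distribution of the channel loads, which is the kind of structural information used elsewhere in the congestion analysis, at the cost of an unnecessary binomial-mean calculation for this particular statement.
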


Due to space limitations, we give the detailed proof of Lemma \ref{lemma12s}  in \cite{key-21}. Without going through detailed analysis, it is straightforward to show the benefit for such adaptive approach through simple numerical examples. Let us consider a network with $M=10$ channels and $N=5$ secondary users. For each channel $m$, the initial channel state probability vector is $\boldsymbol{p}_{m}(0)=(0,1)$
and the transition matrix is \[
\Gamma_{m}=\left[\begin{array}{cc}
1-0.01\epsilon & 0.01\epsilon\\
0.01\epsilon & 1-0.01\epsilon\end{array}\right],\]
where $\epsilon$ is called the dynamic factor. A larger value of $\epsilon$ implies that the channels are more dynamic over time.
We are interested in time average system throughput
$U=\frac{\sum_{t=1}^{T}\sum_{n=1}^{N}u_{n}(t)}{T},$
where $u_{n}(t)$ is the throughput of user $n$ at time slot $t$. In the simulation, we set the total number of time slots $T=2000$.

We implement the following three channel access schemes:
\begin{itemize}
\item Random access scheme: each secondary user selects a channel randomly.
\item Static channel recommendation scheme as in \cite{key-2} with the \emph{optimal} constant branching
probability $P_{rec}=0.7$.
\item Heuristic adaptive channel recommendation scheme with the variable branching probability
$P_{rec}=\frac{R}{N}$.
\end{itemize}

Figure \ref{fig:Comparison-of-three} shows that the heuristic adaptive channel recommendation scheme outperforms the static channel recommendation scheme, which in turn outperforms the
random access scheme. Moreover, the heuristic adaptive scheme is more robust to the dynamic channel environment, as it decreases slower than the static scheme when $\epsilon$ increases.

\begin{figure}[ht]
\begin{center}
\includegraphics[scale=0.45]{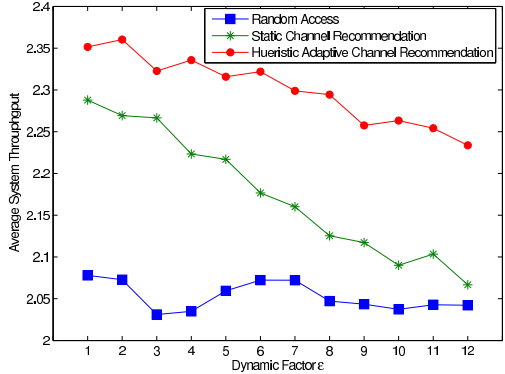}
\end{center}
\caption{\label{fig:Comparison-of-three}Comparison of three channel access
schemes}

\end{figure}

We can imagine that an optimal adaptive scheme (by setting the right $P_{rec}(t)$ over time) can further increase the network performance. However, computing the optimal branching probability in closed-form is very difficult. In the rest of the paper, we will focus on characterizing the structures of the optimal spectrum access strategy and designing an efficient algorithm to achieve the optimum.


\section{\label{sec:Adaptive-Channel-Recommendation}Adaptive Channel Recommendation
Scheme}

We first study the optimal channel recommendation in the homogeneous channel environment, i.e., each channel $m$ has the same data rate $B_{m}=B$ and identical channel state changing probabilities $p_{m}=p,q_{m}=q$. The generalization to the heterogeneous channel setting will be discussed in Section \ref{sec:Adaptive-Channel-RecommendationII}. To find the optimal adaptive spectrum access strategy, we formulate the system as a Markov Decision Process
(MDP). For the sake of simplicity, we assume that the recommendation
buffer size $W=1$, i.e., users only consider the recommendations received in the last time slot. Our method also applies to the case when $W>1$ by using
a high-order MDP formulation, although the analysis is more involved.

\begin{table*}[tt]
\begin{align}
P{}_{R,R'}^{P_{rec}}= & \sum_{m_{r}+m_{u}=R'\ }\sum_{R\geq\bar{m}_{r}\geq m_{r},M-R\geq\bar{m}_{u}\geq m_{u}\ }\sum_{n_{r}+n_{u}=N,n_{r}\geq\bar{m}_{r},n_{u}\geq\bar{m}_{u}}\left(\begin{array}{c}
N\\
n_{r}\end{array}\right)P_{rec}^{n_{r}}(1-P_{rec})^{n_{u}} \nonumber \\
 & \cdot\left(\begin{array}{c}
\bar{m}_{r}\\
m_{r}\end{array}\right)(1-q)^{m_{r}}q^{\bar{m}_{r}-m_{r}}\frac{R!}{(R-\bar{m}_{r})!}\left(\begin{array}{c}
n_{r}-1\\
\bar{m}_{r}-1\end{array}\right)R{}^{-n_{r}} \nonumber \\
 & \cdot\left(\begin{array}{c}
\bar{m}_{u}\\
m_{u}\end{array}\right)(\frac{p}{p+q})^{m_{u}}(\frac{q}{p+q})^{\bar{m}_{u}-m_{u}}\frac{(M-R)!}{(M-R-\bar{m}_{u})!}\left(\begin{array}{c}
n_{u}-1\\
\bar{m}_{u}-1\end{array}\right)(M-R){}^{-n_{u}}. \label{eq:2565} \end{align}
\hrule
\end{table*}

\subsection{MDP Formulation For Adaptive Channel Recommendation}
We model the system as a MDP as follows:
\begin{itemize}
\item \emph{System state}: $R\in\mathcal{R}\triangleq\{0,1,...,\min\{M,N\}\}$ denotes the number of recommended channels at the end of time slot $t$. Since we assume that all channels are statistically identical, then there is no need to keep track of the recommended channel IDs\footnote{Users need to know the IDs of the recommended channels in order to access them. However, the IDs are not important in terms of MDP analysis.}.
\item \emph{Action}: $P_{rec}\in\mathcal{P}\triangleq(0,1)$ denotes the branching probability of choosing the set of recommended channels.
\item \emph{Transition probability}: The probability that action $P_{rec}$ in system state $R$ in time slot $t$ will lead to system state $R'$ in the next time
slot  is
\[P{}_{R,R'}^{P_{rec}}=Pr\{R(t+1)=R'|R(t)=R,P_{rec}(t)=P_{rec}\}.\]
We can compute this probability as in (\ref{eq:2565}), with detailed derivations given in Appendix \ref{Derivation}.
\item \emph{Reward}: $U(R,P_{rec})$ is the expected system throughput in the next time slot when the action $P_{rec}$ is taken under the current system state $R$, i.e.,
\[
U(R,P_{rec})=\sum_{R\in\mathcal{R'}}P{}_{R,R'}^{P_{rec}}U_{R'},\]
where $U_{R'}$ is the system throughput in state $R'$. If $R'$ idle channels are utilized by the secondary users in a time slot, then these $R'$ channels will be recommended at the end of the time slot. Thus, we have \[
U_{R'}=R'B.\]
Recall that $B$ is the data rate that a single user can obtain on an
idle channel.
\item \emph{Stationary Policy:} $\pi\in\Omega\triangleq\mathcal{P}{}^{|\mathcal{R}|}$ maps each state $R$ to an action $P_{rec}$, i.e., $\pi(R)$ is the action $P_{rec}$ taken when the system is in state $R$. \rev{The mapping is stationary and does not depend on time $t$.}
\end{itemize}

Given a stationary policy $\pi$ and the initial state $R_{0}\in\mathcal{R}$,
we define the network's value function as the time average system throughput,
i.e. \[
\Phi_{\pi}(R_{0})=\lim_{T\rightarrow\infty}\frac{1}{T}E_{\pi}\left[\sum_{t=0}^{T-1}U(R(t),\pi(R(t)))\right].\]
We want to find an optimal stationary policy $\pi^{*}$ that maximizes the value function $\Phi_{\pi}(R_{0})$  for any initial state $R_{0}$, i.e.\[
\pi^{*}=\arg\max_{\pi}\Phi_{\pi}(R_{0}),\forall R_{0}\in\mathcal{R}.\]
\rev{Notice that this is a system wide optimization, although the optimal solution can be implemented in a distributed fashion. This is because every user knows the number of recommended channels $R$, and it can determine the same optimal access probability locally.} For example, each user can calculate the optimal spectrum access policy off-line, and determine the real-time optimal channel access probability $P_{rec}$ locally by observing the number of recommended channels $R$ after entering the network.

\subsection{Existence of Optimal Stationary Policy}

MDP formulation above is an average reward based MDP. We can prove that an optimal stationary policy that is independent of initial system state always exists in our MDP formulation. The proof relies on the following lemma from \cite{key-4}.
\begin{lem} \label{lemmaA}
If the state space is finite and every stationary policy leads
to an irreducible Markov chain, then there exists a stationary policy
that is optimal for the average reward based MDP.
\end{lem}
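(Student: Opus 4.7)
The plan is to invoke the classical average-reward MDP existence machinery specialized to the unichain case. First I would observe that because $\mathcal{R}$ is finite and every stationary policy $\pi$ induces an irreducible Markov chain, each such chain admits a unique stationary distribution $\mu_\pi$; by the ergodic theorem for finite Markov chains, the time-average of $U(R(t),\pi(R(t)))$ converges almost surely to $\sum_{R\in\mathcal{R}} \mu_\pi(R)\, U(R,\pi(R))$, independent of the initial state $R_0$. Hence $\Phi_\pi(R_0)$ collapses to a constant $\Phi_\pi$ in $R_0$ for every $\pi$, and the task reduces to maximizing a scalar functional over $\Omega$.

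Next I would set up the average-reward Bellman optimality equation: find a scalar $g^*$ and a bias function $h^*:\mathcal{R}\to\mathbb{R}$ satisfying
\[
g^* + h^*(R) \;=\; \sup_{P_{rec}\in\mathcal{P}} \Big\{\, U(R,P_{rec}) + \sum_{R'\in\mathcal{R}} P_{R,R'}^{P_{rec}}\, h^*(R') \,\Big\}
\]
for every $R\in\mathcal{R}$. Establishing existence of a solution $(g^*,h^*)$ is the heart of the argument, and I would obtain it by a vanishing-discount argument: for each $\beta\in(0,1)$ let $V_\beta$ denote the optimal $\beta$-discounted value function (which exists by the standard contraction-mapping argument on $\mathbb{R}^{|\mathcal{R}|}$), fix a reference state $R_0$, and show that the relative values $h_\beta(R):=V_\beta(R)-V_\beta(R_0)$ remain uniformly bounded as $\beta\uparrow 1$. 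This is precisely where irreducibility enters: because every induced chain reaches every state in finite expected time, the cost of transporting probability mass between any two states is controlled uniformly in $\pi$. Extracting a convergent subsequence of $\bigl(h_\beta,\,(1-\beta)V_\beta(R_0)\bigr)$ then yields the desired pair.

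Finally, I would verify that any stationary policy $\pi^*$ attaining the supremum on the right-hand side of the Bellman equation at every $R$ satisfies $\Phi_{\pi^*}=g^*\geq \Phi_\pi$ for all $\pi\in\Omega$; this is a short calculation that sums the Bellman equation along a sample path, divides by $T$, and uses $\mu_{\pi^*}$ to pass to the limit. Existence of such an attainer requires the inner supremum to be achieved. Since $P_{R,R'}^{P_{rec}}$ is a polynomial in $P_{rec}$ by~(\ref{eq:2565}) and $U(R,\cdot)$ is continuous, the inner objective is continuous in $P_{rec}$, so compactifying the action set $\mathcal{P}=(0,1)$ to $[0,1]$ guarantees a maximizer without affecting the optimal value (the boundary points only reduce to the pure random-access rule already covered implicitly).

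The hardest step is the vanishing-discount argument: bounding the family $\{h_\beta\}$ uniformly in $\beta$ is delicate, and the irreducibility hypothesis is exactly what rules out a limiting policy under which some state becomes effectively unreachable, which would otherwise send the relative value to infinity and destroy the extraction of a convergent subsequence.
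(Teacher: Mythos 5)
The paper never actually proves Lemma~\ref{lemmaA}: it is imported from the MDP literature (\cite{key-4}) and used as a black box, so there is no in-paper argument to compare against line by line. Your vanishing-discount construction is precisely the standard textbook proof of this statement, and its skeleton is correct: ergodicity of every induced finite chain makes $\Phi_\pi$ independent of the initial state, a solution $(g^*,h^*)$ of the average-reward optimality equation is extracted from the discounted problems as $\beta\uparrow 1$, and any stationary policy attaining the maximum in that equation is verified to be average-optimal.

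The one substantive soft spot is exactly the step you flag as hardest. You justify the uniform boundedness of the relative values $h_\beta(R)=V_\beta(R)-V_\beta(R_0)$ by asserting that expected ``transport'' (hitting) times are controlled \emph{uniformly in $\pi$}. The hypothesis of the lemma gives irreducibility only for policies with actions in the prescribed action sets, and in this paper $\mathcal{P}=(0,1)$ is not compact, so a supremum of hitting times over all admissible $\pi$ is not obviously finite; rescuing it by passing to the closure would force you to verify irreducibility for limit policies with $P_{rec}\in\{0,1\}$, which neither the hypothesis nor your sketch supplies. The standard argument sidesteps uniformity entirely: fix a single irreducible stationary policy $f$, compare $V_\beta(R)$ with the value of ``follow $f$ until the first visit to $R_0$, then continue $\beta$-optimally''; with $C$ a bound on the one-step reward and $T_{R\to R_0}$ the hitting time under $f$, this yields $V_\beta(R)\geq V_\beta(R_0)-2C\,E_f[T_{R\to R_0}]$, and exchanging the roles of $R$ and $R_0$ gives the matching upper bound, so $\{h_\beta\}$ is bounded using hitting times of one fixed policy only. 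Separately, your compactification aside is inaccurate: $P_{rec}=0$ and $P_{rec}=1$ are not ``the pure random-access rule'' (they mean never, respectively exclusively, accessing recommended channels), and if one insists on a maximizer lying in the open set $\mathcal{P}$ an extra argument is needed; for the lemma as cited from \cite{key-4} this is immaterial, since the quoted result presumes action sets for which the maximum in the optimality equation is attained.
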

The irreducibility of Markov chain means that it is possible to get to any state from any state. For the adaptive channel recommendation scheme, we have
\begin{lem}\label{lemmaB}
Given a stationary policy $\pi$ for the adaptive channel recommendation
MDP, the resulting Markov chain is irreducible.
\end{lem}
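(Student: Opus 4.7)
Irreducibility requires that under the Markov chain induced by $\pi$, every pair of states in $\mathcal{R} = \{0, 1, \ldots, \min(M, N)\}$ communicate. My plan is to avoid grinding through the sum in (\ref{eq:2565}) for arbitrary $(R, R')$ and instead route all transitions through the distinguished state $R = 0$, showing that (i) $R \to 0$ in one step with positive probability, and (ii) $0 \to R'$ in one step with positive probability, for every $R, R' \in \mathcal{R}$. Combining the two gives reachability of any $R'$ from any $R$ in at most two steps, which is more than enough for irreducibility.

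For step (i), I would consider the sample path in which all $N$ users independently pick an unrecommended channel: when $R \geq 1$ this event has probability at least $(1 - \pi(R))^N > 0$ because the action space is the open interval $\mathcal{P} = (0, 1)$, and when $R = 0$ it occurs with probability one since every channel is unrecommended. Under the modeling assumption used to derive (\ref{eq:2565}), each unrecommended channel is idle at the next slot with probability $p/(p+q)$ and busy with probability $q/(p+q)$, both strictly positive since $p, q \in (0, 1]$. Hence with positive probability none of the chosen channels is idle at the next slot, yielding $R' = 0$.

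For step (ii), when $R = 0$ every channel is unrecommended and each user picks any channel uniformly with probability $1/M$, per (\ref{eq:SCR-1}). For any target $R' \leq \min(M, N)$, I would exhibit a sample path in which $R'$ designated users pick $R'$ distinct channels that each turn out to be idle at the next slot (each with probability $p/(p+q) > 0$), while the remaining $N - R'$ users all pile onto a single additional channel so that contention prevents any of them from successfully accessing (or, if $R' = \min(M, N) = N$, this step is vacuous). Each such path has strictly positive probability.

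The main technical obstacle is verifying that the index combinations witnessing these sample paths actually appear as strictly positive summands in (\ref{eq:2565}), in particular handling the factor $(1-q)^{m_r} q^{\bar{m}_r - m_r}$ cleanly in the potentially degenerate case $q = 1$. Because every sample path I construct has $m_r = \bar{m}_r = n_r = 0$ (no recommended channel is successfully re-accessed, or none is chosen in step (ii) where $R = 0$), this potentially problematic factor collapses to $0^0 = 1$ by convention and causes no difficulty; the remaining factors are all strictly positive by the hypotheses $\pi(R) \in (0, 1)$ and $p, q \in (0, 1]$. Once positivity of both families of one-step transitions is in place, the induced Markov chain is irreducible on the finite state space $\mathcal{R}$, which is exactly the input condition needed to invoke Lemma \ref{lemmaA} in the subsequent existence proof.
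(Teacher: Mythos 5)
Your overall route---forcing every pair of states to communicate through the distinguished state $0$, and exhibiting positive-probability one-step sample paths for $R\to 0$ and $0\to R'$---is sound and in fact more uniform than the paper's own argument, which splits on $q$: for $0<q<1$ it asserts that every one-step transition probability in (\ref{eq:2565}) is already positive, and only in the boundary case $q=1$ does it route through $R'=0$ the way you do for all parameter values. Your step (i) is correct as written: if no user branches to the recommended set (probability $(1-\pi(R))^{N}>0$) and every accessed unrecommended channel turns out busy (each with probability $q/(p+q)>0$), then no recommendation is generated and the chain lands in state $0$; and your observation that these paths have $m_{r}=\bar{m}_{r}=n_{r}=0$, so the factor $(1-q)^{m_{r}}q^{\bar{m}_{r}-m_{r}}$ degenerates harmlessly even when $q=1$, is exactly the point that makes the argument survive the case the paper treats separately.

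The flaw is in step (ii). In the model from which (\ref{eq:2565}) is derived, the backoff window satisfies $\lambda^{*}\to\infty$, so whenever $k\geq 1$ users access an idle channel, exactly one of them grabs it (each with probability $1/k$, summing to one) and that channel is then recommended; the derivation in Appendix \ref{Derivation} contains no collision factor, and every accessed idle channel produces a recommendation. Consequently your sample path in which the remaining $N-R'$ users ``pile onto a single additional channel so that contention prevents any of them from successfully accessing'' does not guarantee landing in $R'$: if that extra channel happens to be idle, it too gets recommended and the next state is $R'+1$. The step is easily repaired, and your own setup already contains the repair: either require the additional channel to be busy at the next slot (probability $q/(p+q)>0$, which is precisely the summand in (\ref{eq:2565}) with $\bar{m}_{u}=R'+1$ and $m_{u}=R'$), or send the surplus users onto the $R'$ designated idle channels, which leaves the recommendation count unchanged since recommendations count channels rather than users (this variant also covers the corner case $R'=M<N$ that your parenthetical does not). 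With that correction, the two-step communication argument on the finite state space $\mathcal{R}$ is complete and delivers exactly the irreducibility needed to invoke Lemma \ref{lemmaA}.
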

\begin{proof}
We consider the following two cases:

Case I, when $0<q<1$: since $0<P_{rec}<1$, $0<p\leq1$, and $0<q<1$, we can verify that given any state $R$, the transition probability $P{}_{R,R'}^{P_{rec}}>0$ for all $R'\in\mathcal{R}$. Thus, any two states communicate with each other.

Case II, when $q=1$: for all $R\in\mathcal{R}$, the transition probability $P{}_{R,R'}^{P_{rec}}>0$ if $R'\in\{0,...,\min\{M-R,N\}\}$. It follows that the state $R'=0$ is accessible from any other state
$R\in\mathcal{R}$. By setting $R=0$, we
see that $P{}_{R,R'}^{P_{rec}}>0$,  for all $R'\in\{0,...,\min\{M,N\}\}$. That
is, any other state $R'\in\mathcal{R}$ is also accessible from the state
$R=0$. Thus, any two states communicate with each other.

Since any two states communicate with each other in all cases and
the number of system state $|\mathcal{R}|$ is finite, the resulting
Markov chain is irreducible.
\end{proof}
Combining Lemmas \ref{lemmaA} and \ref{lemmaB}, we have
\begin{thm}
There exists an optimal stationary policy for the adaptive channel
recommendation MDP.
\end{thm}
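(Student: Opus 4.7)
The plan is to obtain the theorem as an immediate corollary of the two lemmas just established, so the work is really to verify that the hypotheses of Lemma \ref{lemmaA} are met by the MDP from Section \ref{sec:Adaptive-Channel-Recommendation}. First I would point out that the state space $\mathcal{R}=\{0,1,\dots,\min\{M,N\}\}$ is finite, since $M$ and $N$ are both fixed finite integers in the system model; this takes care of the first hypothesis of Lemma \ref{lemmaA} without any real work.

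Next I would invoke Lemma \ref{lemmaB}, which says that for an arbitrary stationary policy $\pi\in\Omega$ the induced Markov chain on $\mathcal{R}$ is irreducible. Since $\pi$ in that lemma is arbitrary, this gives exactly the ``every stationary policy leads to an irreducible Markov chain'' clause that Lemma \ref{lemmaA} requires. With both hypotheses verified, Lemma \ref{lemmaA} directly produces a stationary policy $\pi^{*}$ that is optimal for the average-reward MDP, and whose optimality does not depend on the initial state $R_{0}$—which is precisely the notion of optimality defined right before the theorem.

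There is really no hard step here: the entire content of the proof has been packaged into Lemmas \ref{lemmaA} and \ref{lemmaB}, so the write-up is essentially one sentence pointing out that $|\mathcal{R}|<\infty$ together with a one-sentence application of Lemma \ref{lemmaB} for each $\pi$. If anything warrants care, it is just making sure the reader notices that the irreducibility argument in Lemma \ref{lemmaB} was already carried out for an arbitrary admissible $\pi$ (covering both $0<q<1$ and $q=1$), so no additional case analysis is needed at this stage.
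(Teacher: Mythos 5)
Your proposal is correct and matches the paper's own argument, which likewise obtains the theorem by combining Lemma \ref{lemmaA} (finite state space plus irreducibility under every stationary policy) with Lemma \ref{lemmaB}. The only cosmetic difference is that the paper attributes the independence from the initial state $R_{0}$ to the irreducibility property in a remark after the theorem rather than folding it into the statement of Lemma \ref{lemmaA}, but this does not change the substance.
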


Furthermore, the irreducibility of the adaptive channel
recommendation MDP also implies that the optimal stationary policy $\pi^{*}$ is independent of the initial state $R_{0}$ \cite{key-4}, i.e.
\[
\Phi_{\pi^{*}}(R_{0})=\Phi_{\pi^{*}}, \forall R_{0}\in\mathcal{R},
\]where $\Phi_{\pi^{*}}$ is the maximum time average system throughput.
In the rest of the paper, we will just use \textquotedblleft{}optimal policy\textquotedblright{} to refer \textquotedblleft{}optimal stationary policy that is independent of the initial system state\textquotedblright{}.

\subsection{Structure of Optimal Stationary Policy}

Next we characterize the structure of the optimal policy without using the closed-form expressions of the policy (which is generally hard to achieve). The key idea is to treat the average
reward based MDPs as the limit of a sequence of discounted reward MDPs with discounted
factors going to one. Under the irreducibility condition, the average reward based MDP thus inherits the
structure property from the corresponding discounted reward MDP \cite{key-4}.
We can write down the Bellman equations of the discounted version of our MDP problem as:
\begin{equation}
V_{t}(R)=\max_{P_{rec}\in\mathcal{P}}\sum_{R'\in\mathcal{R}}P{}_{R,R'}^{P_{rec}}[U_{R'}+\beta V_{t+1}(R')],\forall R\in\mathcal{R},\label{eq:5647}\end{equation}
where $V_{t}(R)$ is the discounted maximum expected system throughput starting from time
slot $t$ when the system in state $R$.

Due to the combinatorial complexity of the transition probability
$P{}_{R,R'}^{P_{rec}}$ in (\ref{eq:2565}), it is difficult to obtain
the structure results for the general case. We further limit our attention to the following two asymptotic cases.

\subsubsection{Case One, the number of channels $M$ goes to infinity while the number of users $N$ stays finite}

In this case, the number of channels is much larger than the number of secondary users, and thus heavy congestion rarely happens on any channel. Thus it is safe to emphasizing on accessing the recommended channels. Before proving the main result of Case One in Theorem \ref{thmM>N}, let us first characterize the property of discounted maximum expected system payoff $V_t(R)$.
\begin{proposition}
\label{lem:When-,-i.e.II}When $M=\infty$ and $N<\infty$
, the value function $V_{t}(R)$ for the discounted adaptive
channel recommendation MDP is nondecreasing in $R$ .\end{proposition}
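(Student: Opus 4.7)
The plan is to prove Proposition~\ref{lem:When-,-i.e.II} by backward induction on the standard value iteration for the discounted MDP. Define $V^{(0)}(R)\equiv 0$ and
\[
V^{(k+1)}(R)=\max_{P_{rec}\in(0,1)}\sum_{R'\in\mathcal{R}}P_{R,R'}^{P_{rec}}\bigl[U_{R'}+\beta V^{(k)}(R')\bigr],
\]
so that $V^{(k)}\to V_{t}$ by the usual contraction argument for discounted MDPs. It then suffices to show that every iterate $V^{(k)}$ is nondecreasing in $R$ and pass to the limit. The base case $V^{(0)}\equiv 0$ is trivial, and since the one-step reward $U_{R'}=R'B$ is also nondecreasing in $R'$, the inductive hypothesis implies that $f(R'):=U_{R'}+\beta V^{(k)}(R')$ is nondecreasing.

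Fix $R_{1}\leq R_{2}$ and let $P_{rec}^{(1)}$ be a maximizer defining $V^{(k+1)}(R_{1})$. Because the Bellman operator always admits the suboptimal option of reusing $P_{rec}^{(1)}$ at the larger state, $V^{(k+1)}(R_{2})\geq\sum_{R'}P_{R_{2},R'}^{P_{rec}^{(1)}}f(R')$ while $V^{(k+1)}(R_{1})=\sum_{R'}P_{R_{1},R'}^{P_{rec}^{(1)}}f(R')$. The inductive step therefore reduces to showing that for each fixed $P_{rec}$ the one-step transition is stochastically nondecreasing in $R$, i.e., $P(R'\geq k\mid R_{2},P_{rec})\geq P(R'\geq k\mid R_{1},P_{rec})$ for every integer $k$; combined with the monotonicity of $f$, this yields $V^{(k+1)}(R_{2})\geq V^{(k+1)}(R_{1})$ and closes the induction.

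To obtain the stochastic monotonicity, I would exploit the $M=\infty$ simplification by writing $R'=X+Y$, where $X$ counts the idle recommended channels successfully accessed and $Y$ counts the analogous contribution from the unrecommended pool. Because the unrecommended pool is infinite, each user choosing an unrecommended channel lands on a distinct fresh channel that is idle with probability $\rho:=p/(p+q)$; hence $Y\mid n_{u}\sim\operatorname{Binomial}(n_{u},\rho)$ and the distribution of $n_{u}=N-n_{r}$ depends only on $P_{rec}$, not on $R$. The marginal distribution of $Y$ is thus identical in the $R_{1}$- and $R_{2}$-systems at the same branching probability, so the problem reduces to showing that $X$ is stochastically nondecreasing in $R$ at a fixed $P_{rec}$.

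The hard part will be this final reduction, an occupancy-type statement: when $n_{r}$ balls are placed uniformly and independently into $R$ bins and each bin is independently marked idle with probability $1-q$, the number of idle bins containing at least one ball is stochastically nondecreasing in $R$. A direct synchronous coupling using a single uniform per user fails pathwise, since shifting the bin boundaries can create new collisions in the larger system. My plan is to instead analyze the probability-generating function through the identity $E[w^{D_{R}}]=E[(J/R)^{n_{r}}]$ with $J\sim\operatorname{Binomial}(R,w)$, exploit the convex order of Binomial sample means to conclude that this generating function is nonincreasing in $R$, and combine the result with an inclusion-exclusion representation of the tail probabilities to extract the required stochastic dominance for $X$. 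Substituting back into the Bellman inequality closes the induction and passing $k\to\infty$ delivers Proposition~\ref{lem:When-,-i.e.II}.
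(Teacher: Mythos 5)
Your overall architecture coincides with the paper's: an induction on the value iterates with the ``reuse the maximizer of the smaller state'' trick, which reduces everything to showing that, at a fixed $P_{rec}$, the transition kernel is stochastically nondecreasing in $R$ (this is exactly the paper's Lemma 5 on the reverse cumulative distribution $\sum_{i\ge r}P_{R,i}^{P_{rec}}$; the paper then implements the dominance-plus-monotone-function step by summation by parts, which is equivalent to your phrasing). Your decomposition $R'=X+Y$ in the $M=\infty$ regime is also sound: the unrecommended contribution $Y$ has a law independent of $R$ given $P_{rec}$, and conditioning on the number $D_R$ of distinct occupied recommended bins gives $X\sim\mathrm{Bin}(D_R,1-q)$, so the whole lemma indeed reduces to the occupancy statement that $D_R$ is stochastically nondecreasing in $R$. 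Where you differ from the paper is in how this kernel monotonicity is established: the paper proves it by induction on the number of users $N$ (explicit $N=2$ base case, then conditioning on whether an added user generates a recommendation), whereas you attack the occupancy problem directly.

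The genuine gap is in your final step. The identity $E[w^{D_R}]=E[(J/R)^{n_r}]$ with $J\sim\mathrm{Bin}(R,w)$ is correct, and the convex order of binomial sample means does give $E[w^{D_{R+1}}]\le E[w^{D_R}]$ for every $w\in[0,1]$. But this is only a Laplace-transform-type ordering: pointwise domination of probability generating functions does not imply first-order stochastic dominance in general, and FOSD is what the Bellman step requires, because $U_{R'}+\beta V^{(k)}(R')$ is an arbitrary nondecreasing function, not a mixture of $w^{R'}$'s. The promised ``inclusion-exclusion extraction'' is not supplied and does not go through termwise: writing $P(D_R\le k)=\sum_{d\le k}\binom{R}{d}d!\,S(n_r,d)\,R^{-n_r}$ (Stirling numbers $S$), the individual terms are not monotone in $R$ (e.g.\ the $d=n_r$ term increases with $R$), so no term-by-term argument closes the claim. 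The statement you need is true and admits an elementary fix: throw the $n_r$ balls sequentially and couple the two counting chains (with $R$ and $R+1$ bins); whenever the counts are equal, the chain with more bins has the larger probability $1-d/(R+1)\ge 1-d/R$ of hitting a fresh bin, so the invariant $D_{R+1}\ge D_R$ can be maintained pathwise, yielding $D_{R+1}\succeq_{st}D_R$. With that replacement (or by adopting the paper's induction on $N$), your argument becomes complete; as written, the pgf route stops one genuine implication short of the stochastic dominance you need.
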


The proof of Proposition \ref{lem:When-,-i.e.II} is given in the Appendix.
Based on the monotone property of the value function $V_{t}(R)$, we prove the following main result.
\begin{thm}
\label{thmM>N}
When $M=\infty$ and $N<\infty$, for the adaptive channel
recommendation MDP, the optimal stationary policy $\pi^{*}$ is monotone,
that is, $\pi^{*}(R)$ is nondecreasing on $R\in\mathcal{R}$.\end{thm}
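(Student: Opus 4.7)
The plan is to apply the standard monotone comparative statics framework for Markov decision processes: if the one-step Bellman objective has increasing differences in (state, action), Topkis's theorem yields that the optimizer is nondecreasing in the state. Since Lemma~\ref{lemmaB} establishes irreducibility and the state space $\mathcal{R}$ is finite, the average-reward optimal policy $\pi^{*}$ arises as the $\beta\to 1$ limit of the discounted-optimal policies (via the vanishing-discount approach), and pointwise monotonicity is preserved under this limit. Hence it is enough to prove the monotonicity of the optimizer in each discounted-reward problem (\ref{eq:5647}).

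By Proposition~\ref{lem:When-,-i.e.II}, $V_{t+1}(R')$ is nondecreasing in $R'$; since $U_{R'}=R'B$ is also nondecreasing, the continuation payoff $g(R')\triangleq U_{R'}+\beta V_{t+1}(R')$ is nondecreasing in $R'$. Writing the one-step Bellman objective as $Q_t(R,P_{rec})\triangleq\sum_{R'\in\mathcal{R}}P^{P_{rec}}_{R,R'}g(R')$, the task reduces to showing that for every nondecreasing $g$, $Q_t(R,P_{rec})$ has increasing differences in $(R,P_{rec})$, after which Topkis's theorem delivers that $\arg\max_{P_{rec}}Q_t(R,P_{rec})$ is nondecreasing in $R$.

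To verify the required supermodularity I would exploit the $M=\infty$ limit, in which collisions on unrecommended channels vanish almost surely. Let $n_r$ denote the number of users who choose a recommended channel, so $n_r\sim\mathrm{Binomial}(N,P_{rec})$, and let $n_u=N-n_r$. The number of idle unrecommended channels grabbed is then $m_u\mid n_u\sim\mathrm{Binomial}(n_u,p/(p+q))$, independent of $R$. On the recommended side, the $n_r$ users fall uniformly into $R$ bins, and $m_r$ is the number of those bins that are both covered and idle (idle independently with probability $1-q$). Since $R'=m_r+m_u$, the transition kernel factors as a mixture over $n_r$ of an occupancy distribution on $m_r$ (depending on $(R,n_r)$) and a binomial on $m_u$ (depending only on $n_r$).

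The crux, and the main obstacle, is to show that the reduced function $h(R,n_r)\triangleq\mathbb{E}\!\left[g\!\left(m_r(R,n_r)+m_u(N-n_r)\right)\right]$ is supermodular in $(R,n_r)$. I would establish this by constructing an explicit coupling of the four occupancy processes at $(R,n_r)$, $(R+1,n_r)$, $(R,n_r+1)$, and $(R+1,n_r+1)$, arguing that the marginal gain from one extra recommended bin is stochastically larger when more balls land on the recommended side---intuitively, congestion on a small pool of recommended channels is alleviated more effectively by enlarging the pool when many users crowd in, which is exactly the complementarity that yields a nondecreasing optimal branching probability. Once $h$ is supermodular in $(R,n_r)$, increasing differences of $Q_t$ in $(R,P_{rec})$ follow because mixing $h(R,n_r)$ against the Binomial distribution of $n_r$ preserves supermodularity, as the Binomial kernel is TP2 in $(n_r,P_{rec})$ (a standard preservation result). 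Sending $\beta\to 1$ then transfers the monotonicity of $\arg\max Q_t$ to $\pi^{*}$.
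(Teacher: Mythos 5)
Your skeleton coincides with the paper's: pass to the discounted problem and transfer the structure back via the vanishing-discount/inheritance argument, use Proposition~\ref{lem:When-,-i.e.II} so that the continuation payoff $U_{R'}+\beta V_{t+1}(R')$ is nondecreasing in $R'$, show the Bellman objective $Q_{t}(R,P_{rec})=\sum_{R'}P^{P_{rec}}_{R,R'}[U_{R'}+\beta V_{t+1}(R')]$ has increasing differences in $(R,P_{rec})$, and invoke Topkis. The difference lies entirely in how the increasing-differences property of the transition kernel is established, and that is exactly where your proposal stops short of a proof. The paper's technical content for this theorem lives in its appendix lemmas (Lemmas~\ref{lem:nondecreaseing} and~\ref{lem:When-,-i.e.}), which prove by an induction on $N$ (with a conditioning trick that peels off one user at a time) that the reverse cumulative distribution function $f_{r}(R,P_{rec})=\sum_{R'\geq r}P^{P_{rec}}_{R,R'}$ is nondecreasing and supermodular on $\mathcal{R}\times\mathcal{P}$ when $M=\infty$. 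Your substitute for this--supermodularity of $h(R,n_{r})=E\left[g\left(m_{r}(R,n_{r})+m_{u}(N-n_{r})\right)\right]$ in $(R,n_{r})$ for every nondecreasing $g$--is only announced (``I would establish this by constructing an explicit coupling'') and never constructed. Since this is the crux of the theorem, not a routine verification, the proposal has a genuine gap.

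The gap is substantive rather than cosmetic. Increasing $n_{r}$ by one does two things simultaneously: it adds a ball to the $R$ recommended bins and it removes a user from the unrecommended side, so $m_{u}$ is stochastically decreased. Because $g$ is an arbitrary nondecreasing (generally nonlinear) function, the cross-difference $h(R+1,n_{r}+1)-h(R+1,n_{r})-h(R,n_{r}+1)+h(R,n_{r})$ does not decompose into a recommended-side term plus a separable $m_{u}$ term; any coupling must control the joint law of the coverage variable and $m_{u}$ across all four corner points, and your congestion-relief heuristic speaks only to the recommended side. Equivalently, what must be shown is supermodularity of the reverse CDF of $m_{r}+m_{u}$ in $(R,n_{r})$, which is precisely the combinatorial fact the paper spends its induction on. The remaining ingredients of your plan are sound: the $M=\infty$ reduction of the kernel (binomial split of users, coverage model on the $R$ recommended bins thinned by $1-q$, and $m_{u}\sim\mathrm{Binomial}(n_{u},p/(p+q))$ on the unrecommended side) is correct, and the final mixing step needs only first-order stochastic monotonicity of $\mathrm{Binomial}(N,P_{rec})$ in $P_{rec}$ applied to the increment $h(R+1,\cdot)-h(R,\cdot)$ (TP2 is more than is required). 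If you complete the coupling (or instead reprove the paper's Lemma~\ref{lem:When-,-i.e.} by its induction on $N$), the rest of your argument goes through.
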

\begin{proof}
For the ease of discussion, we define \[
Q_{t}(R,P_{rec})=\sum_{R'\in\mathcal{R}}P{}_{R,R'}^{P_{rec}}[U_{R'}+\beta V_{t+1}(R')],\]
with the partial cross derivative being \begin{eqnarray*}
\frac{\partial^{2}Q_{t}(R,P_{rec})}{\partial R\partial P_{rec}} & = & \frac{\partial\sum_{R'\in\mathcal{R}}P{}_{R+1,R'}^{P_{rec}}[U_{R'}+\beta V_{t+1}(R')]}{\partial P_{rec}}\\
 &  & -\frac{\partial\sum_{R'\in\mathcal{R}}P{}_{R,R'}^{P_{rec}}[U_{R'}+\beta V_{t+1}(R')]}{\partial P_{rec}}.\end{eqnarray*}
By Lemma $6$ in the Appendix, we know the reverse cumulative distribution
function $\sum_{R'\in\mathcal{R}}P{}_{R,R'}^{P_{rec}}$ is supermodular on $\mathcal{R}\times\mathcal{P}$.
It implies \[
\frac{\partial\sum_{R'\in\mathcal{R}}P{}_{R+1,R'}^{P_{rec}}}{\partial P_{rec}}-\frac{\partial\sum_{R'\in\mathcal{R}}P{}_{R,R'}^{P_{rec}}}{\partial P_{rec}}\geq0.\]
Since $V_{t+1}(R')$ is nondecreasing in $R'$ by Proposition \ref{lem:When-,-i.e.II} and  $U_{R'}=R'B$,   we know that $U_{R'}+\beta V_{t+1}(R')$ is also nondecreasing in $R'$.
Then we have\begin{eqnarray*}
 &  & \frac{\partial\sum_{R'\in\mathcal{R}}P{}_{R+1,R'}^{P_{rec}}[U_{R'}+\beta V_{t+1}(R')]}{\partial P_{rec}}\\
 & \geq & \frac{\partial\sum_{R'\in\mathcal{R}}P{}_{R,R'}^{P_{rec}}[U_{R'}+\beta V_{t+1}(R')]}{\partial P_{rec}},\end{eqnarray*}
i.e.,\[
\frac{\partial^{2}Q_{t}(R,P_{rec})}{\partial R\partial P_{rec}}\geq0,\]
which implies that $Q_{t}(R,P_{rec})$ is supermodular on $\mathcal{R}\times\mathcal{P}$.
Since \[
\pi^{*}(R)=\arg\max_{P_{rec}}Q_{t}(R,P_{rec}),\]
by the property of super-modularity, the optimal policy $\pi^{*}(R)$
is nondecreasing on $R$ for the discounted MDP above. Since the average
reward based MDP inherits its structure property, this result is also
true for the adaptive channel recommendation MDP.
\end{proof}

\subsubsection{Case Two, the number of users $N$ goes to infinity while the number of channels $M$ stays finite}
In this case, the number of secondary users is much larger than the number of channels, and thus congestion becomes a major concern. However, since there are infinitely many secondary users, all the idle channels at each time slot can be utilized as long as users have positive probabilities to access all channels. From the system's point of view, the cognitive radio network operates in the saturation state. Formally, we show that
\begin{thm}\label{InfN}
When $N=\infty$ and $M<\infty$, for the adaptive channel
channel recommendation MDP, any stationary policy $\pi$ satisfying\[
0<\pi(R)<1,\forall R\in\mathcal{R},\]
is optimal. \end{thm}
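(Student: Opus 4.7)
The plan is to show that in the limit $N\to\infty$ the MDP becomes degenerate in the sense that every admissible action $P_{rec}\in(0,1)$ induces the same one-step transition law on $\mathcal{R}$ and the same one-step expected reward. Once this is established, every stationary policy satisfying $0<\pi(R)<1$ has the same value function and so is optimal. The argument has three steps, and the main obstacle is to sidestep the combinatorial expression (\ref{eq:2565}) rather than trying to take its $N\to\infty$ limit term by term.

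First I would show that, with probability one in the limit, every one of the $M$ channels is picked by at least one user in every slot. Under action $P_{rec}\in(0,1)$ a given user selects a particular recommended channel with probability $P_{rec}/R$ and a particular unrecommended channel with probability $(1-P_{rec})/(M-R)$; in the boundary states $R\in\{0,M\}$ the access rule is uniform sampling, so each channel is picked with probability $1/M$. In every case each channel is picked by each user with some probability $\alpha>0$ that is independent of $N$, so the probability that a particular channel receives no access attempt is $(1-\alpha)^{N}\to 0$. A union bound over the finitely many channels gives the claim.

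Second, I would combine this with the backoff analysis of Section \ref{sec:System-Model}: in the asymptotic backoff regime $\lambda^{*}\to\infty$, whenever at least one user competes on an idle channel exactly one user grabs it, the channel produces total throughput $B$ in that slot, and its identifier is broadcast. Hence in the $N=\infty$ limit every idle channel is utilized, and the recommended set at the end of slot $t+1$ coincides with the set of idle channels in slot $t+1$. Consequently, conditional on $R(t)=R$, the new state $R(t+1)$ is distributed as $X+Y$ where $X$ is binomial with parameters $(R,1-q)$ (previously recommended channels that remain idle) and $Y$ is an independent binomial with parameters $(M-R,p)$ (previously unrecommended channels that flip to idle). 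This distribution contains no reference to $P_{rec}$, so both $P_{R,R'}^{P_{rec}}$ and $U(R,P_{rec})=\sum_{R'}P_{R,R'}^{P_{rec}}R'B$ depend only on $R$.

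Third, any two admissible stationary policies $\pi_{1},\pi_{2}$ therefore induce identical Markov reward processes on $\mathcal{R}$, so their value functions coincide. By Lemma \ref{lemmaB} the resulting chain is irreducible, which makes the common value $\Phi_{\pi}$ independent of the initial state. Because every admissible policy attains the same $\Phi_{\pi}$, each of them also attains the maximum $\Phi_{\pi^{*}}$ and is therefore optimal. The cleanest execution of steps one and two avoids working inside (\ref{eq:2565}) at all: one simply re-derives the transition law from first principles in the $N=\infty$ regime using the ``every channel is accessed'' observation, which is where the almost-sure events from the two steps must be aligned with a small amount of care.
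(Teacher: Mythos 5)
Your first two steps reproduce the first half of the paper's argument: with $N=\infty$ every channel is accessed under any interior action, so the transition law ceases to depend on $P_{rec}$ and all policies with $0<\pi(R)<1$ induce the same Markov reward process and the same average throughput. But equality of values among interior policies is not what the theorem asks you to prove; it asks for \emph{optimality}, and the paper explicitly measures optimality against the boundary policies as well, i.e., the set $\Delta^{c}$ of policies with $\pi'(\bar{R})=0$ or $\pi'(\hat{R})=1$ in some state. This is the entire second half of the paper's proof and it is not a formality: under $\pi'(\hat{R})=1$ only previously recommended channels are accessed, so newly idle unrecommended channels are wasted and the state process is stochastically degraded; symmetrically for $\pi'(\bar{R})=0$. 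The paper handles this by showing the reverse cumulative transition probabilities satisfy $\sum_{R'=i}^{M}P_{R,R'}^{\pi(R)}\geq\sum_{R'=i}^{M}P_{R,R'}^{\pi'(R)}$ for all $i,R$, and then propagating this first-order stochastic dominance through a backward induction on finite-horizon value functions to conclude $V_{t}^{\pi}(R)\geq V_{t}^{\pi'}(R)$ for all $t$. Your step three simply declares that ``every admissible policy attains the maximum,'' which is circular unless the competitor class is restricted to the interior policies themselves; as the theorem (and the paper's proof) is intended, that comparison with $\Delta^{c}$ is the substantive part, and your proposal contains no argument for it.

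A secondary, smaller issue: your first-principles re-derivation of the limiting kernel gives $R(t+1)=X+Y$ with $Y\sim\mathrm{Binomial}(M-R,\,p)$, on the grounds that an unrecommended channel must have been busy. The MDP the theorem is about is defined by (\ref{eq:2565}), whose $N=\infty$ form (\ref{eq:215.1}) models unrecommended channels with the stationary idle probability $p/(p+q)$, not the busy-to-idle probability $p$. Your physical reasoning is plausible in its own right, but it analyzes a different transition kernel than the one the MDP prescribes; this does not affect the key observation that the kernel is independent of $P_{rec}$, but a correct proof of the stated theorem must work with the paper's kernel, and the dominance argument over $\Delta^{c}$ above must be carried out for that kernel.
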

\begin{proof}
We first define the sets of policies $\Delta\triangleq\{\pi:0<\pi(R)<1,\forall R\in\mathcal{R}\}$
and $\Delta^{c}=\Omega\backslash\Delta$.  Recall that the value of $\pi(R)$ equals the probability of choosing the set of recommended channels, i.e., $P_{rec}$.

Then it is easy to check that the probability of accessing an arbitrary channel
$m$ is positive under any policy $\pi\in\Delta$. Since
the number of secondary users $N=\infty$, it implies that all the
channels will be accessed by the secondary users. In this case, the
transition probability from a system state $R$ to $R'$ of the resulting
Markov chain is given by\begin{eqnarray}
 &  & P_{R,R'}^{\pi(R)}\nonumber \\
 & = & \sum_{m_{r}+m_{u}=R',m_{r}\le R,m_{u}\le M-R}\left(\begin{array}{c}
R\\
m_{r}\end{array}\right)(1-q)^{m_{r}}q^{R-m_{r}}\nonumber\\
 &  & \cdot\left(\begin{array}{c}
M-R\\
m_{u}\end{array}\right)(\frac{p}{p+q})^{m_{u}}(\frac{q}{p+q})^{M-R-m_{u}},\label{eq:215.1}\end{eqnarray}
which is independent of the branching probability $\pi(R)$. It implies
that any policy $\pi\in\Delta$ leads to a Markov chain with
the same transition probabilities $P_{R,R'}^{P_{rec}}$. Thus, any
policy $\pi\in\Delta$ offers the same time average system throughput.

We next show that any policy $\pi'\in\Delta^{c}$  leads to a payoff no better than the payoff of a policy $\pi\in\Delta$. For a policy $\pi'$ where there exists some states $\bar{R}$
such that $\pi'(\bar{R})=0$, the transition probability from
the system state $\bar{R}$ to $R'$ is\begin{eqnarray*}
P_{\bar{R},R'}^{\pi'(\bar{R})} & = & \begin{cases}
\left(\begin{array}{c}
M-\bar{R}\\
R'\end{array}\right)(\frac{p}{p+q})^{R'}(\frac{q}{p+q})^{M-\bar{R}-R'}\\
\ \ \ \ \ \ \ \ \ \ \ \ \ \ \ \ \ \ \ \ \ \ \ \ \ \ \mbox{If}\ R'\le M-\bar{R},\\
0\mbox{\ \ \ \ \ \ \ \ \ \ \ \ \ \ \ \ \ \ \ \ \ \ \ \ \ \ If}\ R'>M-\bar{R}.
\end{cases}\end{eqnarray*}
If there exists some states $\hat{R}$ such that $\pi'(\hat{R})=1$,
we have the transition probability as\begin{eqnarray*}
P_{\hat{R},R'}^{\pi'(\hat{R})} & = & \begin{cases}
\left(\begin{array}{c}
\hat{R}\\
R'\end{array}\right)(1-q)^{R'}q^{\hat{R}-R'} & \mbox{If\ }R'\le\hat{R},\\
0 & \mbox{If\ }R'>\hat{R}.\end{cases}\end{eqnarray*}
Since \begin{eqnarray*}
 &  & \left(\begin{array}{c}
M-\bar{R}\\
R'\end{array}\right)(\frac{p}{p+q})^{R'}(\frac{q}{p+q})^{M-\bar{R}-R'}\\
 & = & \sum_{j=0}^{\bar{R}}\left(\begin{array}{c}
\bar{R}\\
j\end{array}\right)(1-q)^{j}q^{\bar{R}-j}\\
 &  & \cdot\left(\begin{array}{c}
M-\bar{R}\\
R'\end{array}\right)(\frac{p}{p+q})^{R'}(\frac{q}{p+q})^{M-\bar{R}-R'},\end{eqnarray*}
and\begin{eqnarray*}
 &  & \left(\begin{array}{c}
\hat{R}\\
R'\end{array}\right)(1-q)^{R'}q^{\hat{R}-R'}\\
 & = & \sum_{j=0}^{M-\hat{R}}\left(\begin{array}{c}
M-\hat{R}\\
j\end{array}\right)(\frac{p}{p+q})^{j}(\frac{q}{p+q})^{M-\hat{R}-j}\\
 &  & \cdot\left(\begin{array}{c}
\hat{R}\\
R'\end{array}\right)(1-q)^{R'}q^{\hat{R}-R'},\end{eqnarray*}
compared with (\ref{eq:215.1}), we have\[
\sum_{R'=i}^{M}P_{R,R'}^{\pi(R)}\geq\sum_{R'=i}^{M}R_{R,R'}^{\pi'(R)},\forall i,R\in\mathcal{R},\pi\in\Delta,\pi'\in\Delta^{c}.\]

Suppose that the time horizon consists of any $T$ time slots, and $V_{t}^{\pi}(R)$
denotes the expected system throughput under the policy $\pi$ by starting from
time slot $t$ when the system in state $R$.

When $t=T$, \begin{eqnarray*}
V_{T}^{\pi}(R) & = & V_{T}^{\pi'}(R)\\
 & = & U_{R}\\
 & = & RB,\forall R\in\mathcal{R},\pi\in\Delta,\pi'\in\Delta^{c}.\end{eqnarray*}
It follows that $U_{R}+\beta V_{T}^{\pi}(R)=U_{R}+\beta V_{T}^{\pi'}(R)$,
and hence \begin{eqnarray*}
 &  & \sum_{R'=0}^{M}P_{R,R'}^{\pi(R)}[U(R)+\beta V_{T}^{\pi}(R)]\\
 & \geq & \sum_{R'=0}^{M}R_{R,R'}^{\pi'(R)}[U(R)+\beta V_{T}^{\pi'}(R)],\end{eqnarray*}
i.e., \[
V_{T-1}^{\pi}(R)\geq V_{T-1}^{\pi'}(R),\forall R\in\mathcal{R},\pi\in\Delta,\pi'\in\Delta^{c}.\]
Recursively, for any time slots $t\leq T$, we can show that\[
V_{t}^{\pi}(R)\geq V_{t}^{\pi'}(R),\forall R\in\mathcal{R},\pi\in\Delta,\pi'\in\Delta^{c}.\]
Thus, if there exists a policy $\pi'\in\Delta^{c}$ that is optimal, then
all the policies $\pi\in\Delta$ is also optimal. If there does not
exist such a policy $\pi'$, then we conclude that only the policy
$\pi\in\Delta$ is optimal.
\end{proof}


\section{\label{sec:Model-Reference-Adaptive}Model Reference Adaptive Search
For Optimal Spectrum Access Policy}
Next we will design an algorithm that can converge to the optimal policy under general system parameters (not limiting to the two asymptotic cases).
Since the action space of the adaptive channel recommendation MDP
is continuous (i.e., choosing a probability $P_{rec}$ in $(0,1)$), the traditional method of discretizing the action space
followed by the policy, value iteration, or Q-learning cannot guarantee to converge to the optimal policy. To overcome this difficulty, we propose a new algorithm developed from the Model Reference Adaptive Search method, which was recently developed in the Operations Research community \cite{key-5}. We will show that the proposed algorithm is easy to implement and is provably convergent to the optimal policy.

\subsection{Model Reference Adaptive Search Method}
We first introduce the basic idea of the Model Reference Adaptive Search (MRAS) method. Later on, we will show how the method can be used to obtain optimal spectrum access policy for our problem.

The MRAS method is a new randomized
method for global optimization \cite{key-5}. The key idea is to randomize the original optimization problem over the feasible region according
to a specified probabilistic model. The method then generates candidate solutions
and updates the probabilistic model on the basis of elite solutions
and a reference model, so that to guide the future search toward better solutions.

Formally, let $J(x)$ be the objective function to maximize. The
MRAS method is an iterative algorithm, and it includes three phases in each iteration $k$:
\begin{itemize}
\item \emph{Random solution generation}: generate a set of random solutions \rev{$\{x\}$ in the feasible set $\chi$} according to a parameterized probabilistic model $f(x,v_{k})$, which is a probability density function (pdf) with parameter $v_{k}$. \rev{The number of solutions to generate is a fixed system parameter.}
\item \emph{Reference distribution construction}: select elite solutions among the randomly generated set in the previous phase, such that the chosen ones satisfy $J(x)\geq\gamma$. Construct a reference probability distribution as\begin{eqnarray}
g_{k}(x) & =\begin{cases}
\frac{I_{\{J(x)\geq\gamma\}}}{E_{f(x,v_{0})}[\frac{I_{\{J(x)\geq\gamma\}}}{f(x,v_{0})}]} & k=1,\\
\frac{e^{J(x)}I_{\{J(x)\geq\gamma\}}g_{k-1}(x)}{E_{g_{k-1}}[e^{J(x)}I_{\{J(x)\geq\gamma\}}]} & k\geq2,\end{cases}\label{eq:2546-1}\end{eqnarray}
where $I_{\{\varpi\}}$ is an indicator function, which equals $1$ if the event $\varpi$ is true and zero otherwise. Parameter $v_{0}$ is the initial parameter for the probabilistic model (used during the first iteration, i.e., $k=1$),
and $g_{k-1}(x)$ is the reference distribution in the previous iteration (used when $k\geq 2$).
\item \emph{Probabilistic model update}: update the parameter $v$ of the probabilistic
model $f(x,v)$ by minimizing the Kullback-Leibler divergence between
$g_{k}(x)$ and $f(x,v)$, i.e.\begin{equation}
v_{k+1}=\arg\min_{v}E_{g_{k}}\left[\ln\frac{g_{k}(x)}{f(x,v)}\right].\label{eq:12567}\end{equation}

\end{itemize}

By constructing the reference distribution according to
(\ref{eq:2546-1}), the expected performance of random elite solutions can be improved under the new reference distribution, i.e.,\begin{eqnarray}
E_{g_{k}}[e^{J(x)}I_{\{J(x)\geq\gamma\}}] & = & \frac{\int_{x\in\chi}e^{2J(x)}I_{\{J(x)\geq\gamma\}}g_{k-1}(x)dx}{E_{g_{k-1}}[e^{J(x)}I_{\{J(x)\geq\gamma\}}]}\nonumber \\
 & = & \frac{E_{g_{k-1}}[e^{2J(x)}I_{\{J(x)\geq\gamma\}}]}{E_{g_{k-1}}[e^{J(x)}I_{\{J(x)\geq\gamma\}}]}\nonumber \\
 & \geq & E_{g_{k-1}}[e^{J(x)}I_{\{J(x)\geq\gamma\}}].\label{eq:2354}\end{eqnarray}
To find a better solution to the optimization problem, it is natural to update the probabilistic model (from which random solution  are generated in the first stage) as close to the new reference probability as possible, as done in the third stage.

\subsection{Model Reference Adaptive Search For Optimal Spectrum Access Policy}

In this section, we design an algorithm based on the MRAS method to find the optimal
spectrum access policy. Here we treat the adaptive channel recommendation MDP as a global optimization problem over the policy space. The key challenge is the choice of proper probabilistic model $f(\cdot)$, which is crucial for the convergence of the MRAS algorithm.

\subsubsection{Random Policy Generation}

To apply the MRAS method, we first need to set up a random policy generation
mechanism. Since the action space of the channel recommendation MDP
is continuous, we use the Gaussian distributions.
Specifically, we generate sample actions $\pi(R)$ from a Gaussian
distribution for each system state $R\in\mathcal{R}$ independently,
i.e. $\pi(R)\sim\mathcal{N}(\mu_{R},\sigma_{R}^{2})$.\footnote{Note that the Gaussian distribution has a support over $(-\infty,+\infty)$, which is larger than the feasible region of $\pi(R)$. This issue will be handled in Section \ref{STE}.} In this case,
a candidate policy $\pi$ can be generated from the joint distribution
of $|\mathcal{R}|$ independent Gaussian distributions, i.e.,\begin{eqnarray*}
(\pi(0),...,\pi(\min\{M,N\})) & \sim & \mathcal{N}(\mu_{0},\sigma_{0}^{2})\times\cdots\\
 &  & \times\mathcal{N}(\mu_{\min\{M,N\}},\sigma_{\min\{M,N\}}^{2}).\end{eqnarray*}
As shown later, Gaussian distribution has nice analytical and convergent
properties for the MRAS method.

For the sake of brevity, we denote $f(\pi(R),\mu_{R},\sigma_{R})$
as the pdf of the Gaussian distribution $\mathcal{N}(\mu_{R},\sigma_{R}^{2})$,
and $f(\pi,\boldsymbol{\mu},\boldsymbol{\sigma})$ as random policy
generation mechanism with parameters $\boldsymbol{\mu}\triangleq(\mu_{0},...,\mu_{\min\{M,N\}})$ and $\boldsymbol{\sigma}\triangleq(\sigma_{0},...,\sigma_{\min\{M,N\}})$,
i.e.,\begin{eqnarray*}
f(\pi,\boldsymbol{\mu},\boldsymbol{\sigma}) & = & \prod_{R=0}^{\min\{M,N\}}f(\pi(R),\mu_{R},\sigma_{R})\\
 & = & \prod_{R=0}^{\min\{M,N\}}\frac{1}{\sqrt{2\varphi\sigma_{R}^{2}}}e^{-\frac{(\pi(R)-\mu_{R})^{2}}{2\sigma_{R}^{2}}},\end{eqnarray*}
where $\varphi$ is the circumference-to-diameter ratio. \com{change $pi$ to a single Greek letter.}

\subsubsection{System Throughput Evaluation}\label{STE}

Given a candidate policy $\pi$ randomly generated based on $f(\pi,\boldsymbol{\mu},\boldsymbol{\sigma})$, we need to evaluate the expected system
throughput $\Phi_{\pi}$. From (\ref{eq:2565}), we obtain the
transition probabilities $P{}_{R,R'}^{\pi(R)}$ for any system state
$R,R'\in\mathcal{R}$. Since a policy $\pi$ leads to a finitely
irreducible Markov chain, we can obtain its stationary distribution.
Let us denote the transition matrix of the Markov chain as $Q\triangleq[P{}_{R,R'}^{\pi(R)}]_{|\mathcal{R}|\times|\mathcal{R}|}$
and the stationary distribution as $\boldsymbol{p}=(Pr(0),...,Pr(\min\{M,N\}))$.
Obviously, the stationary distribution can be obtained by solving
the following equation\[
\boldsymbol{p}Q=\boldsymbol{p}.\]
We then calculate the expected system throughput $\Phi_{\pi}$ by
\[
\Phi_{\pi}=\sum_{R\in\mathcal{R}}Pr(R)U_{R}.\]

Note that in the discussion above, we assume that $\pi\in\Omega$
implicitly, where $\Omega$ is the feasible policy space. Since Gaussian distribution has a support over
$(-\infty,+\infty)$, we thus extend the definition of expected system throughput $\Phi_{\pi}$ over $(-\infty,+\infty)^{|\mathcal{R}|}$
as\[
\Phi_{\pi}=\begin{cases}
\sum_{R\in\mathcal{R}}Pr(R)U_{R} & \pi\in\Omega,\\
-\infty & \mbox{Otherwise.}\end{cases}\]
In this case, whenever any generated policy $\pi$ is not feasible, we have $\Phi_{\pi}=-\infty$.  \rev{As a result, such policy $\pi$ will not be selected as an elite sample (discussed next) and will not used for probability updating.} Hence the search of MRAS algorithm will not bias towards any unfeasible policy space.

\subsubsection{Reference Distribution Construction}

To construct the reference distribution, we first need to select the
elite policies. Suppose $L$ candidate policies, $\pi_{1},\pi_{2},...,\pi_{L}$,
are generated at each iteration. We order them based on an increasing order of the expected system throughputs
$\Phi_{\pi}$, i.e., $\Phi_{\hat{\pi}_{1}}\le\Phi_{\hat{\pi}_{2}}\le...\le\Phi_{\hat{\pi}_{L}}$, and set the elite threshold as
\begin{equation*}
\gamma=\Phi_{\hat{\pi}_{\lceil(1-\rho)L\rceil}},
\end{equation*}
where $0<\rho<1$ is the elite ratio. \rev{For example, when $L=100$ and $\rho=0.4$, then $\gamma=\Phi_{\hat{\pi}_{60}}$ and the last $40$ samples in the sequence will be selected as elite samples.}
Note that as long as $L$ is sufficiently large, we shall have $\gamma<\infty$ and hence only feasible policies $\pi$ are selected. According to (\ref{eq:2546-1}),
we then construct the reference distribution as

\begin{eqnarray}
g_{k}(\pi) & =\begin{cases}
\frac{I_{\{\Phi_{\pi}\geq\gamma\}}}{E_{f(\pi,\boldsymbol{\mu}_{0},\boldsymbol{\sigma}_{0})}[\frac{_{I_{\{\Phi_{\pi}\geq\gamma\}}}}{f(\pi,\boldsymbol{\mu}_{0},\boldsymbol{\sigma}_{0})}]} & k=1,\\
\frac{e^{\Phi_{\pi}}I_{\{\Phi_{\pi}\geq\gamma\}}g_{k-1}(\pi)}{E_{g_{k-1}}[e^{\Phi_{\pi}}I_{\{\Phi_{\pi}\geq\gamma\}}]} & k\geq2.\end{cases}\label{eq:8975}\end{eqnarray}

\subsubsection{Policy Generation Update}

For the MRAS algorithm, the critical issue is the updating of random
policy generation mechanism $f(\pi,\boldsymbol{\mu},\boldsymbol{\sigma})$,
or solving the problem in (\ref{eq:12567}). The optimal update rule is described as follow.
\begin{thm}
The optimal parameter $(\boldsymbol{\mu},\boldsymbol{\sigma})$ that
minimizes the Kullback-Leibler divergence between the reference distribution
$g_{k}(\pi)$ in (\ref{eq:8975}) and the new policy generation mechanism
$f(\pi,\boldsymbol{\mu},\boldsymbol{\sigma})$ is \begin{eqnarray}
\mu_{R} & = & \frac{\int_{\pi\in\Omega}e^{(k-1)\Phi_{\pi}}I_{\{\Phi_{\pi}\geq\gamma\}}\pi(R)d\pi}{\int_{\pi\in\Omega}e^{(k-1)\Phi_{\pi}}I_{\{\Phi_{\pi}\geq\gamma\}}d\pi},\forall R\in\mathcal{R},\label{eq:524-1}\\
\sigma_{R}^{2} & = & \frac{\int_{\pi\in\Omega}e^{(k-1)\Phi_{\pi}}I_{\{\Phi_{\pi}\geq\gamma\}}[\pi(R)-\mu_{R}]^{2}d\pi}{\int_{\pi\in\Omega}e^{(k-1)\Phi_{\pi}}I_{\{\Phi_{\pi}\geq\gamma\}}d\pi},\forall R\in\mathcal{R}.\nonumber \\ \nonumber
\label{eq:524-2}\\\end{eqnarray}
\end{thm}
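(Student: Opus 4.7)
The plan is to reduce the Kullback-Leibler minimization to the classical cross-entropy problem for the exponential family, exploit the fact that the candidate $f$ is a product of independent Gaussians, and then unfold the recursion (\ref{eq:8975}) to get rid of the implicit $g_{k-1}$ reference.

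First I would rewrite the objective: since $E_{g_k}[\ln g_k(\pi)]$ does not depend on $(\boldsymbol{\mu},\boldsymbol{\sigma})$, minimizing the KL divergence in (\ref{eq:12567}) is equivalent to maximizing the cross-entropy
\begin{equation*}
H(\boldsymbol{\mu},\boldsymbol{\sigma})\;=\;E_{g_k}\!\left[\ln f(\pi,\boldsymbol{\mu},\boldsymbol{\sigma})\right]\;=\;\sum_{R\in\mathcal{R}}E_{g_k}\!\left[-\tfrac{1}{2}\ln(2\varphi\sigma_R^2)-\frac{(\pi(R)-\mu_R)^2}{2\sigma_R^2}\right].
\end{equation*}
Because the joint Gaussian factorizes, the problem decouples across states $R$. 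Differentiating with respect to $\mu_R$ and $\sigma_R^2$ and setting the first-order conditions to zero gives the standard method-of-moments answer $\mu_R=E_{g_k}[\pi(R)]$ and $\sigma_R^2=E_{g_k}[(\pi(R)-\mu_R)^2]$. A quick second-order check (or the strict log-concavity of the Gaussian density in its natural parameters) confirms that this stationary point is indeed the global minimizer of the KL divergence.

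Next I would unroll the recursion (\ref{eq:8975}) to produce the explicit formulas claimed in the theorem. Observe that $g_1(\pi)$ is proportional to $I_{\{\Phi_\pi\geq\gamma\}}$ (the denominator in the $k=1$ branch is a constant), and plugging this into the $k\geq 2$ branch successively yields, for every $k\geq 1$,
\begin{equation*}
g_k(\pi)\;=\;\frac{e^{(k-1)\Phi_\pi}I_{\{\Phi_\pi\geq\gamma\}}}{\int_{\pi\in\Omega}e^{(k-1)\Phi_\pi}I_{\{\Phi_\pi\geq\gamma\}}\,d\pi},
\end{equation*}
where the indicator confines the support to $\Omega$ since $\Phi_\pi=-\infty$ outside the feasible set and thus $e^{(k-1)\Phi_\pi}=0$ there. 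Substituting this closed form into the moment expressions $E_{g_k}[\pi(R)]$ and $E_{g_k}[(\pi(R)-\mu_R)^2]$ immediately yields the integral ratios stated in the theorem.

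The only slightly delicate step is justifying the unfolding of the recursion, because one must keep track of the normalizing constants and verify that the ``uniform-on-elite'' first iterate is recovered correctly from the unusual-looking denominator in the $k=1$ branch of (\ref{eq:8975}). Once that bookkeeping is done, the remainder is a decoupled one-dimensional Gaussian cross-entropy calculation, which is a textbook exercise. I do not expect any genuine obstacle beyond this normalization bookkeeping, since the Gaussian family was chosen precisely to make the KL minimization in (\ref{eq:12567}) admit this closed form.
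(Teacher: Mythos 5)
Your proposal is correct and follows essentially the same route as the paper: unroll the recursion (\ref{eq:8975}) to obtain $g_k(\pi)\propto e^{(k-1)\Phi_\pi}I_{\{\Phi_\pi\geq\gamma\}}$ on $\Omega$, reduce the KL minimization to maximizing the cross-entropy $E_{g_k}[\ln f(\pi,\boldsymbol{\mu},\boldsymbol{\sigma})]$, and solve the decoupled per-state Gaussian first-order conditions, which give exactly the moment-matching ratios (\ref{eq:524-1})--(\ref{eq:524-2}). Your justification of global optimality via natural parameters (rather than the paper's log-concavity remark) is a minor, if anything slightly more careful, variation of the same argument.
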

\linespread{0.9}
\begin{proof}
First, from (\ref{eq:8975}), we have\begin{eqnarray*}
g_{1}(\pi) & = & \frac{I_{\{\Phi_{\pi}\geq\gamma\}}}{E_{f(\pi,\boldsymbol{\mu}_{0},\boldsymbol{\sigma}_{0})}[\frac{_{I_{\{\Phi_{\pi}\geq\gamma\}}}}{f(\pi,\boldsymbol{\mu}_{0},\boldsymbol{\sigma}_{0})}]}\\
 & = & \frac{I_{\{\Phi_{\pi}\geq\gamma\}}}{\int_{\pi\in\Omega}I_{\{\Phi_{\pi}\geq\gamma\}}d\pi},\end{eqnarray*}
and,
\begin{eqnarray*}
g_{2}(\pi) & = & \frac{e^{\Phi_{\pi}}I_{\{\Phi_{\pi}\geq\gamma\}}g_{1}(\pi)}{E_{g_{1}}[e^{\Phi_{\pi}}I_{\{\Phi_{\pi}\geq\gamma\}}]}\\
 & = & \frac{e^{\Phi_{\pi}}I_{\{\Phi_{\pi}\geq\gamma\}}I_{\{\Phi_{\pi}\geq\gamma\}}}{E_{g_{1}}[e^{\Phi_{\pi}}I_{\{\Phi_{\pi}\geq\gamma\}}]\int_{\pi\in\Omega}I_{\{\Phi_{\pi}\geq\gamma\}}d\pi}\\
 & = & \frac{e^{\Phi_{\pi}}I_{\{\Phi_{\pi}\geq\gamma\}}I_{\{\Phi_{\pi}\geq\gamma\}}}{\int_{\pi\in\Omega}e^{\Phi_{\pi}}I_{\{\Phi_{\pi}\geq\gamma\}}\frac{I_{\{\Phi_{\pi}\geq\gamma\}}}{\int_{\pi\in\Omega}I_{\{\Phi_{\pi}\geq\gamma\}}d\pi}d\pi\int_{\pi\in\Omega}I_{\{\Phi_{\pi}\geq\gamma\}}d\pi}\\
 & = & \frac{e^{\Phi_{\pi}}I_{\{\Phi_{\pi}\geq\gamma\}}}{\int_{\pi\in\Omega}e^{\Phi_{\pi}}I_{\{\Phi_{\pi}\geq\gamma\}}d\pi}.\end{eqnarray*}
Repeat the above computation iteratively, we have%
\begin{equation}
g_{k}(\pi)=\frac{e^{(k-1)\Phi_{\pi}}I_{\{\Phi_{\pi}\geq\gamma\}}}{\int_{\pi\in\Omega}e^{(k-1)\Phi_{\pi}}I_{\{\Phi_{\pi}\geq\gamma\}}d\pi},k\geq1.\label{eq:23589}\end{equation}
Then, the problem in (\ref{eq:12567}) is equivalent to solving \begin{eqnarray}
\max_{\boldsymbol{\mu},\boldsymbol{\sigma}} & \int_{\pi\in\Omega}g_{k}(\pi)\ln f(\pi,\boldsymbol{\mu},\boldsymbol{\sigma})d\pi,\label{eq:235}\\
\mbox{subject to} & \boldsymbol{\mu},\boldsymbol{\sigma}\succeq0,\nonumber\end{eqnarray}
Substituting (\ref{eq:23589}) into (\ref{eq:235}), we have\begin{eqnarray}
\max_{\boldsymbol{\mu},\boldsymbol{\sigma}} & \int_{\pi\in\Omega}e^{(k-1)\Phi_{\pi}}I_{\{\Phi_{\pi}\geq\gamma\}}\ln f(\pi,\boldsymbol{\mu},\boldsymbol{\sigma})d\pi,\label{eq:235-1}\\
\mbox{subject to} & \boldsymbol{\mu},\boldsymbol{\sigma}\succeq0,\nonumber\end{eqnarray}

Function $f(\pi(R),\mu_{R},\sigma_{R})$ is log-concave, since it is the pdf of the Gaussian distribution. Since the log-concavity is closed under multiplication,
then $f(\pi,\boldsymbol{\mu},\boldsymbol{\sigma})=\prod_{R=0}^{\min\{M,N\}}f(\pi(R),\mu_{R},\sigma_{R})$
is also log-concave. It implies the problem in (\ref{eq:235}) is
a concave optimization problem. Solving by the first order condition,
we have \begin{eqnarray*}
\frac{\partial\int_{\pi\in\Omega}e^{(k-1)\Phi_{\pi}}I_{\{\Phi_{\pi}\geq\gamma\}}\ln f(\pi,\boldsymbol{\mu},\boldsymbol{\sigma})d\pi}{\partial\mu_{R}} & = & 0,\forall R\in\mathcal{R},\\
\frac{\partial\int_{\pi\in\Omega}e^{(k-1)\Phi_{\pi}}I_{\{\Phi_{\pi}\geq\gamma\}}\ln f(\pi,\boldsymbol{\mu},\boldsymbol{\sigma})d\pi}{\partial\sigma_{R}} & = & 0,\forall R\in\mathcal{R},\end{eqnarray*}
which leads to (\ref{eq:524-1}) and (\ref{eq:524-2}).
Due to the concavity of the optimization problem in (\ref{eq:235}),
the solution is also the global optimum for the random policy generation updating.
\end{proof}
\linespread{1.0}

\subsubsection{MARS Algorithm For Optimal Spectrum Access Policy}

Based on the MARS algorithm, we generate $L$ candidate polices
at each iteration. Then the updates in (\ref{eq:524-1}) and
(\ref{eq:524-2}) are replaced by the sample average version in (\ref{eq:525-1})
and (\ref{eq:525-2}), respectively. As a summary, we describe the MARS-based
algorithm for finding the optimal spectrum access policy of adaptive
channel recommendation MDP in Algorithm \ref{alg:MRAS-Method-For}.

\subsection{Convergence of Model Reference Adaptive Search}

In this part, we discuss the convergence property of the MRAS-based
optimal spectrum access policy. For ease of exposition, we
assume that the adaptive channel recommendation MDP has a unique global
optimal policy. Numerical studies in \cite{key-5} show that the MRAS
method also converges where there are multiple global optimal solutions.
We shall show that the
random policy generation mechanism $f(\pi,\boldsymbol{\mu}_{k},\boldsymbol{\sigma}_{k})$ will eventually generate the optimal policy.
\begin{thm}
\label{theorem1}
For the MRAS algorithm, \rev{the limiting point of the policy sequence}  $\{\pi_{k}\}$ generated by the sequence
of random policy generation mechanism $\{f(\pi,\boldsymbol{\mu}_{k},\boldsymbol{\sigma}_{k})\}$ converges point-wisely to
the optimal spectrum access policy $\pi^{*}$ for the adaptive channel
recommendation MDP, i.e., \begin{eqnarray}
\lim_{k\rightarrow\infty}E_{f(\pi,\boldsymbol{\mu}_{k},\boldsymbol{\sigma}_{k})}[\pi(R)] & = & \pi^{*}(R),\forall R\in\mathcal{R},\label{eq:lemma12}\\
\lim_{k\rightarrow\infty}Var_{f(\pi,\boldsymbol{\mu}_{k},\boldsymbol{\sigma}_{k})}[\pi(R)] & = & 0,\forall R\in\mathcal{R}.\label{eq:lemma12-2}\end{eqnarray}
\end{thm}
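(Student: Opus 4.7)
My plan is to leverage the closed-form expression (\ref{eq:23589}), $g_k(\pi)\propto e^{(k-1)\Phi_\pi}I_{\{\Phi_\pi\geq\gamma\}}$, and argue that this reference distribution concentrates at the unique optimizer $\pi^*$ as $k$ grows. The first observation is that the update rules (\ref{eq:524-1})--(\ref{eq:524-2}) are precisely the first moment and central second moment of $g_k$, so the parameters generated by the algorithm satisfy $\mu_{R,k+1}=E_{g_k}[\pi(R)]$ and $\sigma_{R,k+1}^2=Var_{g_k}[\pi(R)]$. Since $f(\pi,\boldsymbol{\mu}_k,\boldsymbol{\sigma}_k)$ is a product of independent Gaussians, $E_f[\pi(R)]=\mu_{R,k}$ and $Var_f[\pi(R)]=\sigma_{R,k}^2$, so it suffices to show that $E_{g_k}[\pi(R)]\to\pi^*(R)$ and $Var_{g_k}[\pi(R)]\to 0$.

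The concentration step is a Laplace-type argument. First I would confirm that $\Phi_\pi$ is continuous on the feasible policy space $\Omega$: the transition probabilities in (\ref{eq:2565}) are polynomials in the components of $\pi$, so the stationary distribution $\boldsymbol{p}$ determined by $\boldsymbol{p}Q=\boldsymbol{p}$, and hence $\Phi_\pi=\sum_R Pr(R)U_R$, depend continuously on $\pi$. Continuity plus uniqueness of $\pi^*$ then yields, for every $\delta>0$, some $\eta>0$ such that $\Phi_\pi\leq\Phi_{\pi^*}-\eta$ on the set $\{\pi:\Phi_\pi\geq\gamma\}\setminus B_\delta(\pi^*)$. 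Splitting the normalizing integral of $g_k$, the contribution from the complement of $B_\delta(\pi^*)$ is bounded above by a constant times $e^{(k-1)(\Phi_{\pi^*}-\eta)}$, while the contribution from a small sub-ball of $B_\delta(\pi^*)$ on which $\Phi_\pi\geq\Phi_{\pi^*}-\eta/2$ is at least of order $e^{(k-1)(\Phi_{\pi^*}-\eta/2)}$ times a positive Lebesgue volume. Dividing shows that the $g_k$-mass outside $B_\delta(\pi^*)$ decays like $e^{-(k-1)\eta/2}$ and therefore vanishes as $k\to\infty$.

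Once this concentration is in hand, because $\pi(R)\in(0,1)$ is uniformly bounded, dominated convergence gives $E_{g_k}[\pi(R)]\to\pi^*(R)$ and $E_{g_k}[\pi(R)^2]\to\pi^*(R)^2$, so $Var_{g_k}[\pi(R)]\to 0$; feeding these limits back through (\ref{eq:524-1})--(\ref{eq:524-2}) establishes both claims in the theorem. The hardest step is pinning down the lower bound on the normalizing integral near $\pi^*$: one needs that the elite set $\{\Phi\geq\gamma\}$ contains an open neighborhood of $\pi^*$ of positive Lebesgue measure so that the local Laplace integral dominates the exponentially small far-field contribution. Interiority of $\pi^*$ in $\Omega=(0,1)^{|\mathcal{R}|}$ combined with continuity of $\Phi_\pi$ delivers this, though the uniform quantification in $\delta$ requires care. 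A secondary technical point is that the implementable algorithm uses the sample-average versions of (\ref{eq:524-1})--(\ref{eq:524-2}); provided the number of candidate policies $L$ per iteration is sufficiently large, the strong law of large numbers on the elite pool transfers the idealized argument to the algorithm actually executed.
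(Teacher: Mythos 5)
Your proposal is correct in outline and reaches the theorem, but by a genuinely different route than the paper. The paper argues in two steps: (i) a lemma that the reference distributions $g_{k}$ themselves converge to $\pi^{*}$, proved by noting that the elite thresholds $\gamma_{k}$ are nondecreasing and bounded, hence stabilize at some $\gamma_{K}$ after finitely many iterations, that $E_{g_{k}}[e^{\Phi_{\pi}}I_{\{\Phi_{\pi}\geq\gamma_{k}\}}]$ is then monotone by (\ref{eq:2354}), and that its limit must equal $e^{\Phi_{\pi^{*}}}$ via a Fatou's-lemma contradiction on a sublevel-type set $\Theta$ of near-optimal policies; and (ii) a moment-matching step, obtained from the first-order conditions of the KL projection, showing $E_{g_{k}}[\pi(R)]=E_{f(\pi,\boldsymbol{\mu}_{k},\boldsymbol{\sigma}_{k})}[\pi(R)]$ and likewise for second moments. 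You replace (i) by a direct Laplace-type concentration argument on the closed form (\ref{eq:23589}), and (ii) by simply observing that the updates (\ref{eq:524-1})--(\ref{eq:524-2}) are exactly the mean and variance of $g_{k}$, which for a product of Gaussians are the parameters themselves --- a legitimate shortcut once Theorem 2 is available. Your route is more quantitative (exponential decay of $g_{k}$-mass outside $B_{\delta}(\pi^{*})$) and makes explicit where regularity enters; the paper's Fatou route identifies the limiting value of $E_{g_{k}}[e^{\Phi_{\pi}}I_{\{\Phi_{\pi}\geq\gamma_{k}\}}]$ without any separation estimate, but then passes from value convergence to concentration at $\pi^{*}$ with a terse ``monotone and one-to-one'' remark that implicitly needs the same property you flag. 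To make your argument complete you must settle two points. First, the separation bound $\Phi_{\pi}\leq\Phi_{\pi^{*}}-\eta$ on $\{\Phi_{\pi}\geq\gamma\}\setminus B_{\delta}(\pi^{*})$ does not follow from continuity plus uniqueness alone, because $\Omega=(0,1)^{|\mathcal{R}|}$ is open and not compact: you need to rule out near-optimal values of $\Phi_{\pi}$ accumulating near the boundary of the cube, e.g.\ by a continuous extension of $\Phi_{\pi}$ to $[0,1]^{|\mathcal{R}|}$ whose maximum is attained only at $\pi^{*}$. Second, the algorithm's threshold is not fixed: $\gamma_{k}=\max\{\Phi_{\hat{\pi}_{\lceil(1-\rho)L\rceil}},\gamma_{k-1}\}$ changes with $k$, so you should, as the paper does, first use its monotonicity and boundedness to freeze $\gamma_{k}=\gamma_{K}$ for all large $k$ before running the Laplace estimate. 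Your final remark about the sample-average updates versus the exact integrals is an honest extra caveat; note the paper's proof does not address it either, as it works entirely with the idealized updates.
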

The proof is given in the Appendix.

From Theorem \ref{theorem1}, we see that parameter $(\mu_{R,k},\sigma_{R,k})$
for updating in (\ref{eq:525-1}) and (\ref{eq:525-2}) also converges, i.e.,\begin{eqnarray*}
\lim_{k\rightarrow\infty}\mu_{R,k} & = & \pi^{*}(R),\forall R\in\mathcal{R},\\
\lim_{k\rightarrow\infty}\sigma_{R,k} & = & 0,\forall R\in\mathcal{R}.\end{eqnarray*}
Thus, we can use $\max_{R\in\mathcal{R}}\sigma_{R,k}<\xi$ as the stopping
criterion in Algorithm \ref{alg:MRAS-Method-For}.

\begin{algorithm}[tt]
\begin{algorithmic}[1]
\State \textbf{Initialize} parameters for Gaussian distributions $(\boldsymbol{\mu}_{0},\boldsymbol{\sigma}_{0})$,
the elite ratio $\rho$, and the stopping criterion $\xi$. Set initial elite threshold $\gamma_{0}=0$ and iteration index $k=0$.
\Repeat{:}
\State \textbf{Increase} iteration index $k$ by 1.
\State \textbf{Generate} $L$ candidate policies $\pi_{1},...,\pi_{L}$ from the random policy generation mechanism $f(\pi,\boldsymbol{\mu}_{k-1},\boldsymbol{\sigma}_{k-1})$.
\State \textbf{Select} elite policies by setting the elite threshold $\gamma_{k}=\max\{\Phi_{\hat{\pi}_{\lceil(1-\rho)L\rceil}},\gamma_{k-1}\}.$
\State \textbf{Update} the random policy generation mechanism by \begin{align}
\mu_{R,k}&=\frac{\sum_{i=1}^{L}e^{(k-1)\Phi_{\pi}}I_{\{\Phi_{\pi_{i}}\geq\gamma_{k}\}}\pi_{i}(R)}{\sum_{i=1}^{L}e^{(k-1)\Phi_{\pi}}I_{\{\Phi_{\pi_{i}}\geq\gamma_{k}\}}},&\forall R\in\mathcal{R},\label{eq:525-1}\\
\sigma_{R,k}^{2}&=\frac{\sum_{i=1}^{L}e^{(k-1)\Phi_{\pi}}I_{\{\Phi_{\pi_{i}}\geq\gamma_{k}\}}[\pi_{i}(R)-\mu_{R}]^{2}}{\sum_{i=1}^{L}e^{(k-1)\Phi_{\pi}}I_{\{\Phi_{\pi_{i}}\geq\gamma_{k}\}}},&\forall R\in\mathcal{R}. \nonumber \\ \label{eq:525-2}\end{align}
\Until{$\max_{R\in\mathcal{R}}\sigma_{R,k}<\xi.$}
\end{algorithmic}
\caption{\label{alg:MRAS-Method-For}MRAS-based Algorithm For Adaptive Recommendation Based Optimal Spectrum Access}
\end{algorithm}


\section{\label{sec:Adaptive-Channel-RecommendationII}Adaptive Channel Recommendation With Channel Heterogeneity}
\rev{
We now generalize the adaptive channel recommendation to the heterogeneous
channel setting. Recall that the system state $R$ in the homogeneous
channel case only keeps track of how many
channels are recommended. In a heterogeneous channel environment,
each channel has different a data rate $B_{m}$ and channel state changing
probabilities $p_{m}$ and $q_{m}$. Keeping track of the number of recommend channels is not enough for optimal decision. Intuitively, if a channel
with higher data rate $B_{m}$ is recommended, users should choose this channel with a higher weight. The new system state for the heterogeneous channel case should be defined as a vector $\vec{R}\triangleq(I_{1},...,I_{M})$,
where $I_{m}=1$ if channel $m$ is recommended and $I_{m}=0$ otherwise.
The objective of the heterogeneous channel recommendation MDP is then
to find the optimal channel access probabilities $\{P_{m}(\vec{R})\}_{m=1}^{M}$
for each system state $\vec{R}$ where $P_{m}(\vec{R})$ is the probability
of selecting channel $m$. 

Similarly with the homogeneous channel
case, we can apply the MRAS method to obtain the optimal
solutions with the new formulation. However, the number of decision variables $\{P_{m}(\vec{R})\}_{m=1}^{M}$
in the heterogeneous channel model equals to $M2^{M}$,
which causes exponential blow up in the computational complexity. We next
focus on developing a low complexity efficient heuristic algorithm  to solve the MDP.

Recall that in the heuristic algorithm in Lemma \ref{lemma12s} for the homogeneous
channel recommendation, the weight of selecting each recommended channel
is $\frac{1}{N}$ and total weights of choosing recommended channels
are $R\frac{1}{N}$. Similarly, we can design a low complexity heuristic
algorithm for the heterogeneous channel recommendation. More specifically,
we set the weight of selecting channel $m$ is $P_{1}^{m}$ ($P_{0}^{m}$, respectively)
when the channel is recommended (the channel is not recommended, respectively).
Given the system is in state $\vec{R}$, the probability of choosing
channel $m$ is proportional to its weight of its state $I_{m}$,
i.e.,\begin{equation}
P_{m}(\vec{R})=\frac{P_{I_{m}}^{m}}{\sum_{m'=1}^{M}P_{I_{m'}}^{m}}.\label{eq:hcp}\end{equation}
In this case, the total number of decision variables $P_{I_{m}}^{m}$ is
reduced to $2M$, which grows linearly in the number of channels $M$. Let $\vec{\pi}=\{(P_{1}^{m},P_{0}^{m})\}_{m=1}^{M}\in(0,1)^{2M}$denote
the set of corresponding decision variables. Our objective is to find
the optimal $\vec{\pi}$ that maximizes the time average throughput
$\Phi_{\vec{\pi}}$. We can again apply the MRAS method to find the optimal
solution, which is given in Algorithm \ref{alg:MRAS-Method-ForII}.
The procedures of derivation is very similar with the MRAS method
for the homogeneous channel recommendation; we omit the details due to
space limit. 

Note that the optimal policy $\vec{\pi}^{*}$ for the
heuristic heterogeneous channel recommendation is also a feasible policy for
the heterogeneous channel recommendation MDP. The performance of the
optimal policy for the heterogeneous channel recommendation MDP thus
dominates the heuristic heterogeneous channel recommendation. However,
numerical results show that the heuristic heterogeneous channel recommendation
has a small performance loss comparing to the optimal policy while gaining a significant computation complexity
reduction.
}

\begin{algorithm}[tt]
\begin{algorithmic}[1]
\State \textbf{initialize} parameters for the elite ratio $\rho$, Gaussian distributions $\boldsymbol{\mu}(0)=\{(\mu_{1}^{m}(0),\mu_{0}^{m}(0))\}_{m=1}^{M},\boldsymbol{\sigma}(0)=\{(\sigma_{1}^{m}(0),\sigma_{0}^{m}(0))\}_{m=1}^{M}$,
and the stopping criterion $\xi$. Set initial elite threshold $\gamma_{0}=0$ and iteration index $k=0$.
\Repeat{:}
\State \textbf{increase} iteration index $k$ by 1.
\State \textbf{generate} $L$ candidate policies $\vec{\pi}_{1},...,\vec{\pi}_{L}$
from the random policy generation mechanism $f(\vec{\pi},\boldsymbol{\mu}(k-1),\boldsymbol{\sigma}(k-1))$.
\State \textbf{select} elite policies by setting the elite threshold $\gamma_{k}=\max\{\Phi_{\hat{\vec{\pi}}_{\lceil(1-\rho)L\rceil}},\gamma_{k-1}\}.$
\State \textbf{update} the random policy generation mechanism by (for any $I_{m}\in\{0,1\},m\in\mathcal{M}$) \begin{align}
\mu_{I_{m}}^{m}(k) & =  \frac{\sum_{i=1}^{L}e^{(k-1)\Phi_{\vec{\pi}}}I_{\{\Phi_{\vec{\pi}_{i}}\geq\gamma_{k}\}}P_{I_{m}}^{m}}{\sum_{i=1}^{L}e^{(k-1)\Phi_{\vec{\pi}}}I_{\{\Phi_{\vec{\pi}_{i}}\geq\gamma_{k}\}}},\label{eq:525-1}\\
\sigma_{I_{m}}^{m}(k) & = \left(\frac{\sum_{i=1}^{L}e^{(k-1)\Phi_{\vec{\pi}}}I_{\{\Phi_{\vec{\pi}_{i}}\geq\gamma_{k}\}}(P_{I_{m}}^{m}-\mu_{I_{m}}^{m}(k))^{2}}{\sum_{i=1}^{L}e^{(k-1)\Phi_{\vec{\pi}}}I_{\{\Phi_{\vec{\pi}_{i}}\geq\gamma_{k}\}}}\right)^{\frac{1}{2}}.\label{eq:525-2}\end{align}
\Until{$\max_{I_{m}\in\{0,1\},m\in\mathcal{M}}\sigma_{I_{m}}^{m}(k)<\xi.$}
\end{algorithmic}
\caption{\label{alg:MRAS-Method-ForII}MRAS-based Algorithm For Optimizing Heuristic Heterogeneous Channel Recommendation}
\end{algorithm} 

\section{\label{sec:Numerical-Results}Simulation Results}

In this section, we investigate the proposed adaptive channel recommendation scheme by simulations. The results show that the adaptive channel recommendation scheme
not only achieves a higher performance over the static channel recommendation
scheme and random access scheme, but also is more robust to the dynamic
change of the channel environments.

\subsection{Simulation Setup}

We first consider a cognitive radio network consisting of multiple independent
and stochastically homogeneous primary channels.  The data rate of each channel is normalized to be $1$ Mbps. In order to take
the impact of primary user's long run behavior into account, we consider the following two types of channel state transition matrices: \begin{eqnarray}
\mbox{Type 1: } \Gamma^{1} & = & \left[\begin{array}{cc}
1-0.005\epsilon & 0.005\epsilon\\
0.025\epsilon & 1-0.025\epsilon\end{array}\right],\label{T1}\\
\mbox{Type 2: } \Gamma^{2} & = & \left[\begin{array}{cc}
1-0.01\epsilon & 0.01\epsilon\\
0.01\epsilon & 1-0.01\epsilon\end{array}\right],\label{T2}\end{eqnarray}
where $\epsilon$ is the dynamic factor. Recall that a larger $\epsilon$ means that the channels are more dynamic over time. Using (\ref{eq:sd-2}), we know that channel models $\Gamma^{1}$ and $\Gamma^{2}$ have the stationary channel idle probabilities of ${1}/{6}$
and ${1}/{2}$, respectively. In other words, the primary activity level is much higher with the Type 1 channel than with the Type 2 channel.

We initialize the parameters of MRAS algorithm as follows. We set $\mu_{R}=0.5$ and $\sigma_{R}=0.5$ for the Gaussian distribution, which has 68.2\% support over the feasible region $(0,1)$. We found that the performance of the MRAS algorithm is insensitive to the elite ratio $\rho$ when $\rho\leq0.3$. We thus choose  $\rho=0.1$.

When using the MRAS-based algorithm, we need to determine how many (feasible) candidate policies to generate in each iteration.  Figure \ref{fig:MRAS-algorithm-with}
shows the convergence of MRAS algorithm with $100$, $300$, and $500$ candidate policies per iteration, respectively. We have two observations. First,
the number of iterations to achieve convergence reduces as the number
of candidate policies increases. Second, the convergence speed is insignificant when the number changes from $300$ to $500$. We thus choose
$L=500$ for the experiments in the sequel.

\begin{figure}[ht]
\begin{center}
\includegraphics[scale=0.45]{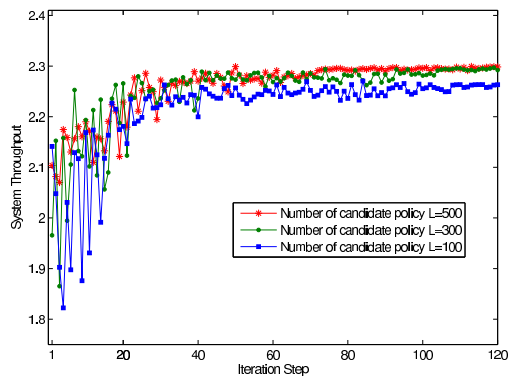}
\caption{\label{fig:MRAS-algorithm-with}The convergence of MRAS-based algorithm with different number
of candidate policies per iteration}
\end{center}
\end{figure}

\subsection{Simulation Results}
We implement the adaptive channel recommendation
scheme with  $M=10$ channels and $N=5$ secondary users. We also benchmark the adaptive channel recommendation
scheme with the static channel recommendation scheme in \cite{key-2} and the random access scheme as the benchmark. We choose the dynamic factor $\epsilon$ within a wide range to investigate the
robustness of the schemes to the channel dynamics. The results are shown in Figures \ref{fig:Type-1-Transition1} -- \ref{fig:Performance_Gain2}. From these figures, we see that
\begin{itemize}
\item \emph{Superior performance of adaptive channel recommendation scheme (Figures \ref{fig:Type-1-Transition1} and \ref{fig:Type-2-Transition1})}: the adaptive channel recommendation scheme performs better than the random
access scheme and static channel recommendation scheme. Typically,
it offers 5\%$\scriptsize{\sim}$18\% performance gain over the static
channel recommendation scheme.
\item  \emph{Impact of channel dynamics (Figures \ref{fig:Type-1-Transition1} and \ref{fig:Type-2-Transition1})}: the performances of both adaptive and static channel recommendation schemes degrade as the dynamic factor
$\epsilon$ increases. The reason is that both two schemes rely on
the recommendation information from previous time slots to make decisions.
When channel states change rapidly, the value of recommendation information
diminishes. However, the adaptive channel recommendation is much more robust to the dynamic channel environment changing (See Figure \ref{fig:Performance_Gain2}). This
is because the optimal adaptive policy takes the channel dynamics into account while the static one does not.
\item \emph{Impact of channel idleness level (Figures \ref{fig:Performance_Gain} and \ref{fig:Performance_Gain2})}: Figure \ref{fig:Performance_Gain} shows the performance gain of  the adaptive channel recommendation scheme over the random access scheme under two different types of transition matrix scenarios. We see that the performance gain decreases with the idle probability of the channel. This shows that the information of channel recommendations can enhance the spectrum access more efficiently when the primary activity level increases (i.e., when the channel idle probability is low). Interestingly,  Figure \ref{fig:Performance_Gain2} shows that the performance gain of the adaptive channel recommendation scheme over the static channel recommendation scheme trends to increase with the channel idleness probability. This illustrates that the adaptive channel recommendation scheme can better utilize the channel opportunities given the information of channel recommendations.
\end{itemize}



\begin{figure}[ht]
\begin{center}
\includegraphics[scale=0.45]{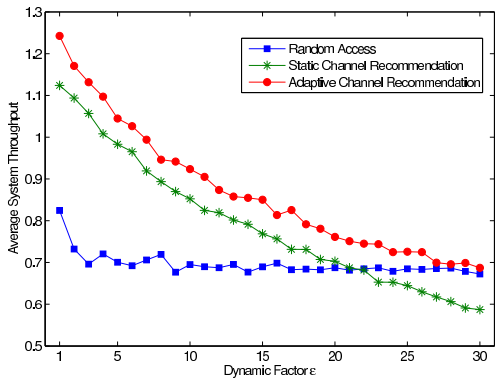}
\caption{\label{fig:Type-1-Transition1}System throughput with $M=10$ channels and $N=5$ users under the Type 1 channel state transition matrix}
\end{center}
\end{figure}

\begin{figure}[ht]
\begin{center}
\includegraphics[scale=0.45]{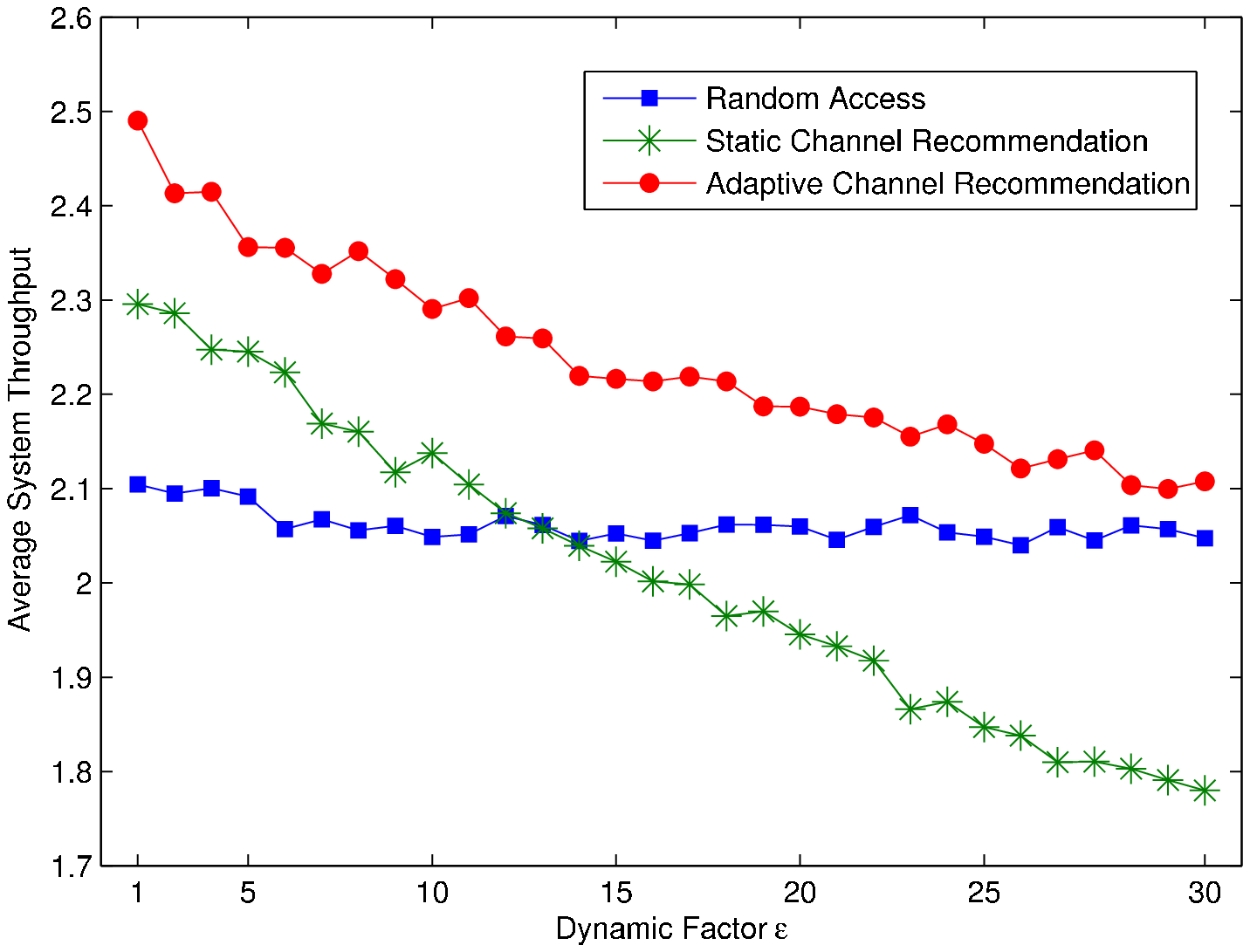}
\caption{\label{fig:Type-2-Transition1}System throughput with $M=10$ channels and $N=5$ users  under the Type 2 channel state transition matrix}
\end{center}
\end{figure}

\begin{figure}[ht]
\begin{center}
\includegraphics[scale=0.45]{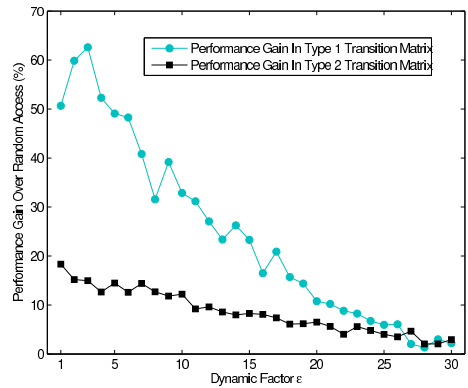}
\caption{\label{fig:Performance_Gain}Performance gain over random access scheme. The Type 1 and Type 2 channels have the stationary channel idle probabilities of ${1}/{6}$ and ${1}/{2}$, respectively.}
\end{center}
\end{figure}

\begin{figure}[ht]
\begin{center}
\includegraphics[scale=0.45]{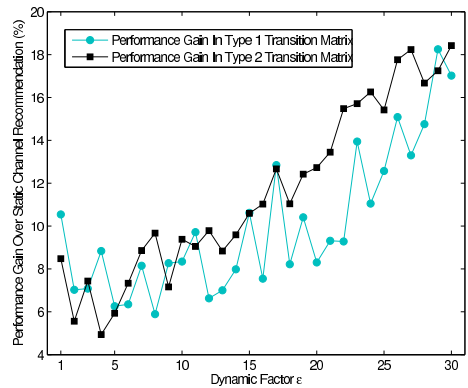}
\caption{\label{fig:Performance_Gain2}Performance gain over static channel recommendation scheme. The Type 1 and Type 2 channels have the stationary channel idle probabilities of ${1}/{6}$ and ${1}/{2}$, respectively.}
\end{center}
\end{figure}

%
%

\subsection{Comparison of MRAS algorithm and Q-Learning}

To benchmark the performance of the spectrum access
policy based on the MRAS algorithm, we compare it
with the policy obtained by Q-learning algorithm \cite{key-22}.

Since the Q-learning can only be used over the discrete action space, we
first discretize the action space $\mathcal{P}$ into
a finite discrete action space $\mathcal{\hat{P}}=\{0.1,...,1.0\}$.
The Q-learning then defines a Q-value representing the estimated quality
of a state-action combination as $
Q:\mathcal{R}\times\mathcal{\hat{P}}_{rec}\rightarrow\mathbb{R}.$
Given a new reward $U(R(t),P_{rec}(t))$ is received, we can update the Q-value to be \begin{multline*}
Q(R(t),P_{rec}(t))= (1-\alpha)Q(R(t),P_{rec}(t)),\\
 +\alpha[U(R(t),P_{rec}(t))+\max_{P_{rec}\in\mathcal{\hat{P}}}Q(R(t+1),P_{rec})],
 \end{multline*}
where $0<\alpha<1$ is the smoothing factor. Given a system
state $R$, the probability of choosing an action $P_{rec}$ is $
P_{r}(P_{rec}(t)=P_{rec}|R(t)=R)=\frac{e^{\tau Q(R,P_{rec})}}{\sum_{P_{rec}^{'}\in\mathcal{\hat{P}}}e^{\tau Q(R,P_{rec})}},$
where $\tau>0$ is the temperature.

After the Q-learning converges, we obtain the corresponding spectrum
access policy $\pi_{Q}$ over the discretized action space $\mathcal{\hat{P}}$.
Note that $\pi_{Q}$ is a sub-optimal policy for the adaptive channel
recommendation MDP over the continuous action space $\mathcal{P}$.

We compare the Q-learning based policy with our MRAS-based optimal policy when there are $M=10$ channels and
$N=5$ users, and show the simulation results in Figures \ref{fig:Type-1-transition3} and \ref{fig:Type-2-transition3}.
From these figures, we see that the MRAS-based
algorithm outperforms Q-learning up to $10\%$, which demonstrates the effectiveness of our proposed algorithm.

\begin{figure}[ht]
\begin{center}
\includegraphics[scale=0.45]{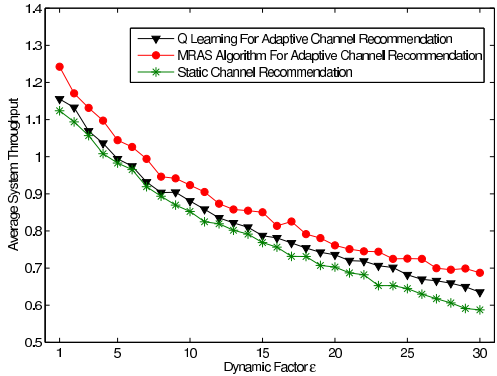}
\caption{\label{fig:Type-1-transition3}Comparison of MRAS-based algorithm and Q-learning with Type 1 channel state transition matrix}
\end{center}
\end{figure}

\begin{figure}[ht]
\begin{center}
\includegraphics[scale=0.45]{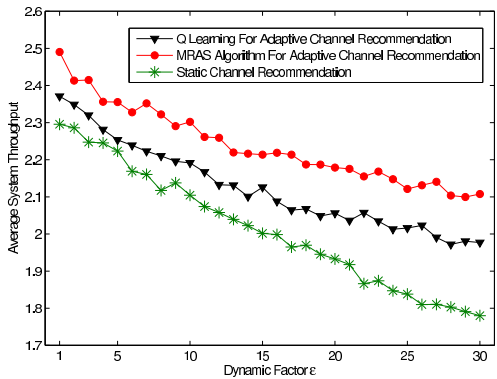}
\caption{\label{fig:Type-2-transition3}Comparison of MRAS-based algorithm and Q-learning with Type 2 channel state transition matrix}
\end{center}
\end{figure}

\subsection{Heuristic Heterogenous Channel Recommendation}
\rev{
We now evaluate the proposed heuristic heterogeneous channel recommendation mechanism in Section \ref{sec:Adaptive-Channel-RecommendationII} with a network consisting of $M=10$ channels and $N=5$ users. We implement the heuristic heterogeneous channel recommendation mechanism in both homogeneous and heterogenous homogeneous environments.
\subsubsection{Homogeneous Channel Environment}
We first study how the heuristic heterogeneous channel recommendation mechanism performs in the homogeneous channel environment (which is a special case of the heterogeneous environment) in both types of $\Gamma^{1}$ and $\Gamma^{2}$ homogeneous channel environments, and simulate the optimal homogeneous channel recommendation (Algorithm \ref{alg:MRAS-Method-For}) as a benchmark. . The data rate of each channel is normalized to be $1$ Mbps. The results are shown in Figures \ref{fig:Type-1-transition3II} and \ref{fig:Type-2-transition3II}. Comparing to the optimal channel access policy, the performance loss of the heuristic heterogeneous channel recommendation in the Type $1$ and Type $2$ channel environments are at most $12\%$ and $5\%$, respectively. This shows the efficiency of the heuristic heterogeneous channel recommendation in homogeneous channel environments.

\begin{figure}[ht]
\begin{center}
\includegraphics[scale=0.45]{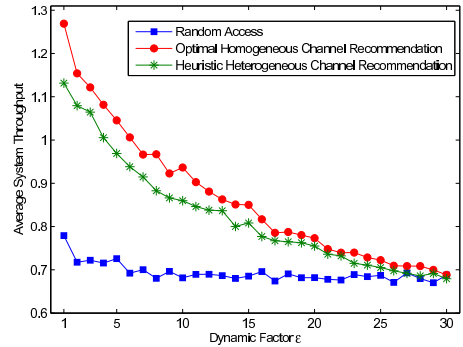}
\caption{\label{fig:Type-1-transition3II}Comparison of heuristic heterogenous channel recommendation and optimal homogeneous channel recommendation in Type 1 homogeneous channel environment.}
\end{center}
\end{figure}

\begin{figure}[ht]
\begin{center}
\includegraphics[scale=0.45]{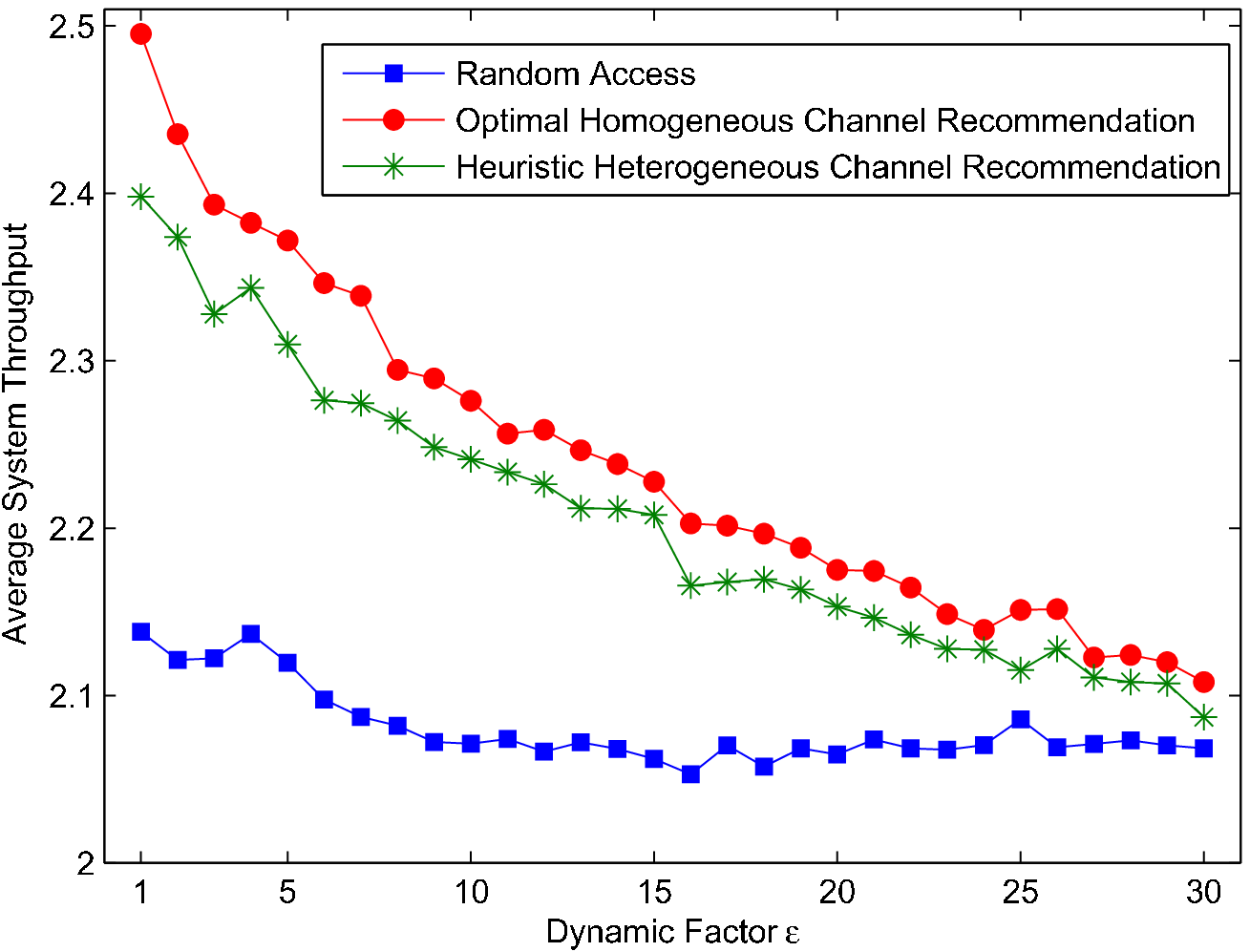}
\caption{\label{fig:Type-2-transition3II}Comparison of heuristic heterogenous channel recommendation and optimal homogeneous channel recommendation in Type 2 homogeneous channel environment.}
\end{center}
\end{figure}

\begin{figure}[ht]
\begin{center}
\includegraphics[scale=0.45]{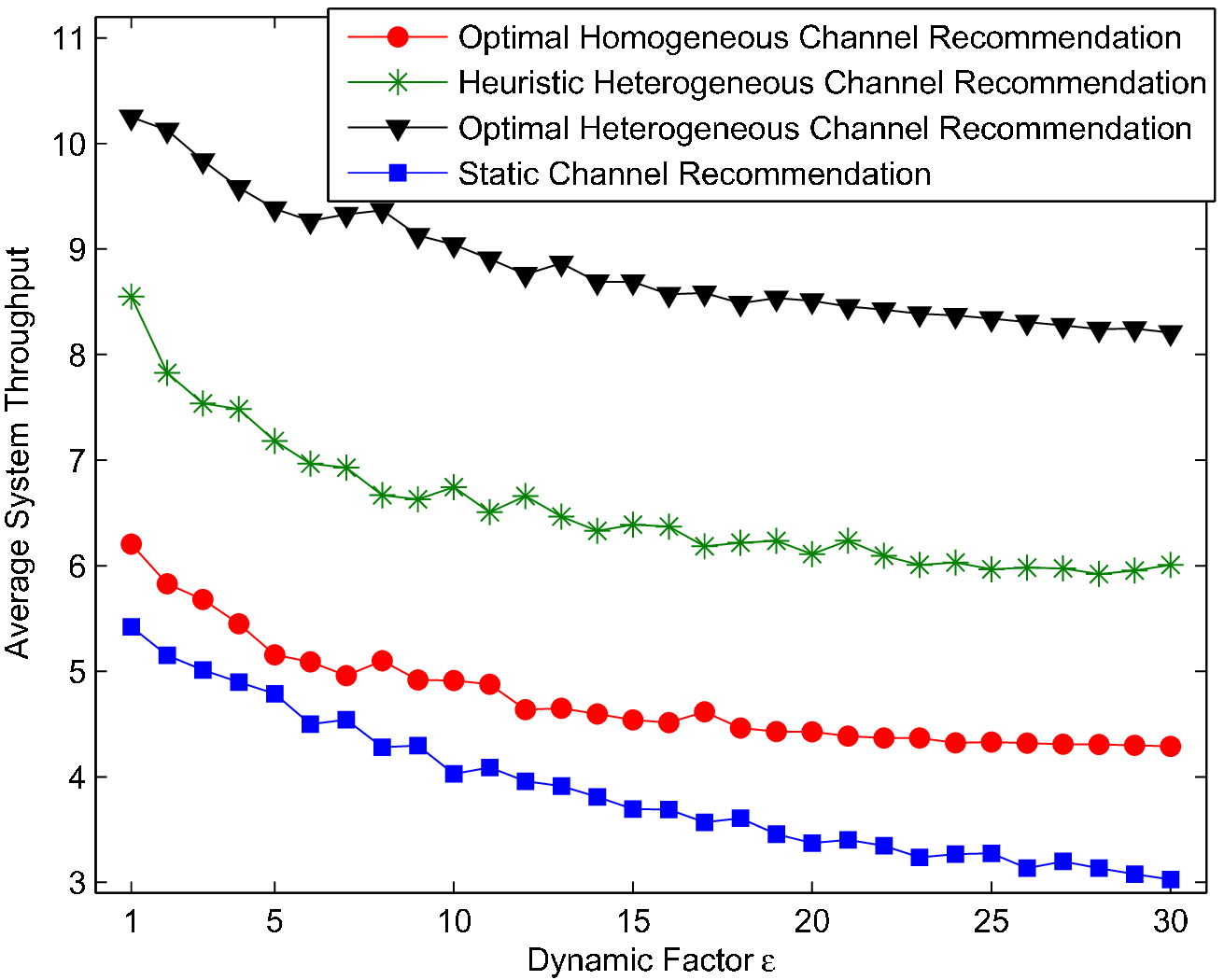}
\caption{\label{fig:Type-1-transition3III}Comparison of heuristic heterogenous channel recommendation, optimal homogeneous channel recommendation and optimal homogeneous channel recommendation in the first kind of heterogenous channel environment.}
\end{center}
\end{figure}

\begin{figure}[ht]
\begin{center}
\includegraphics[scale=0.45]{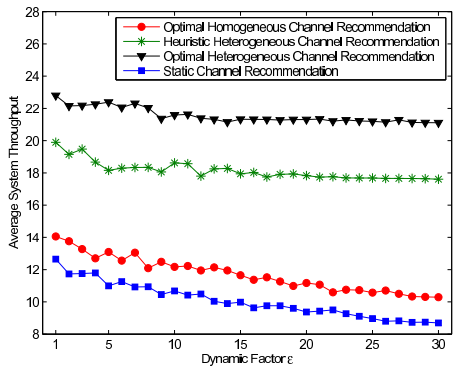}
\caption{\label{fig:Type-2-transition3III}Comparison of heuristic heterogenous channel recommendation, optimal homogeneous channel recommendation and optimal homogeneous channel recommendation in the second kind of heterogenous channel environment.}
\end{center}
\end{figure}

\subsubsection{Heterogeneous Channel Environment}
We next implement the heuristic heterogeneous channel recommendation mechanism in heterogenous channel environments. The data rates of $M=10$ channels are $\{B_{1}=0.2,B_{2}=0.6,B_{3}=0.8,B_{4}=1,B_{5}=2,B_{6}=4,B_{7}=6,B_{8}=8,B_{9}=10,B_{10}=20\}$ Mbps. We consider two kinds of stochastic channel state changing environments:
\begin{align}&\{\Gamma_{1}=\Gamma^{2},\Gamma_{2}=\Gamma^{2},\Gamma_{3}=\Gamma^{2},\Gamma_{4}=\Gamma^{2},\Gamma_{5}=\Gamma^{2},\nonumber \\
&~\Gamma_{6}=\Gamma^{1},\Gamma_{7}=\Gamma^{1},\Gamma_{8}=\Gamma^{1},\Gamma_{9}=\Gamma^{1},\Gamma_{10}=\Gamma^{1}\},\end{align}
and\begin{align}
&\{\Gamma_{1}=\Gamma^{1},\Gamma_{2}=\Gamma^{1},\Gamma_{3}=\Gamma^{1},\Gamma_{4}=\Gamma^{1},\Gamma_{5}=\Gamma^{1},\nonumber \\ &~\Gamma_{6}=\Gamma^{2},\Gamma_{7}=\Gamma^{2},\Gamma_{8}=\Gamma^{2},\Gamma_{9}=\Gamma^{2},\Gamma_{10}=\Gamma^{2}\}.\end{align} Here subscript denotes channel index, and superscript denote channel type index. For the first kind of channel environment, a channel with low data rate tends to have a low primary transmission occupancy. While for the second kind, a channel with high data rate tends to have a high idleness probability. We also implement static channel recommendation, the optimal homogeneous channel recommendation (Algorithm \ref{alg:MRAS-Method-For}) and optimal heterogeneous channel recommendation (obtained by adapting the MRAS algorithm to optimize the heterogeneous channel MDP, not shown in this paper) as benchmarks. The results are depicted in Figures \ref{fig:Type-1-transition3III} and \ref{fig:Type-2-transition3III}. From these figures, we see that:
\begin{itemize}
\item For the first kind of channel environment, the heuristic heterogeneous channel recommendation achieves up-to
$40\%$ and $100\%$ performance improvement over the optimal homogeneous channel
recommendation and the static channel recommendation, respectively. Comparing with
the optimal heterogeneous channel recommendation, the performance
loss of the heuristic heterogeneous channel recommendation is at most
$35\%$. Note that the number of decision variables in the optimal heterogeneous
channel recommendation is $M2^{M}=10240$, while the number of decision variables
in the heuristic heterogeneous channel recommendation is only $2M=20$.
The convergence of the heuristic heterogeneous channel recommendation
hence is much faster than the optimal heterogeneous channel recommendation.
\item For the second kind of channel environment, the heuristic heterogeneous
channel recommendation achieves up-to $70\%$ and $100\%$ performance improvement
over the optimal homogeneous channel recommendation and static channel recommendation, respectively. The performance
loss is at most
$20\%$ comparing with the the optimal heterogeneous channel recommendation. Comparing with Figure \ref{fig:Type-1-transition3III}, we see that  the heuristic heterogeneous channel recommendation
performs better if more channel opportunities are available for the
secondary users.
\end{itemize}
} 

\section{\label{sec:Related-Work}Related Work}

The spectrum access by multiple secondary users can be either \emph{uncoordinated} or \emph{coordinated}.
For the uncoordinated case, multiple secondary users compete with other for the resource. Huang \emph{et al.} in \cite{ key-28 } designed two auction mechanisms to allocate the interference budget among selfish users. Southwell and Huang in \cite{key-88} studied the largest and smallest convergence time to an equilibrium when secondary users access multiple channels in a distributed fashion.
 Liu \emph{et al.} in  \cite{key-16} modeled the interactions among spatially separated users as congestion games with resource reuse.  Li and Han in \cite{key-29} applied the graphic game theory to address the spectrum access problem with limited range of mutual interference. Anandkumar \emph{et al.} in \cite{key-25} proposed a learning-based approach for competitive spectrum access with incomplete spectrum information. Law \emph{et al.} in \cite{key-15} showed that uncoordinated spectrum access may lead to poor system performance.

For the coordinated spectrum access, Zhao \emph{et al.} in  \cite{key-26} proposed a dynamic group formation algorithm to distribute secondary users' transmissions across multiple channels. Shu and Krunz proposed a multi-level spectrum opportunity framework in \cite{key-27}. The above papers assumed that each secondary user knows the entire channel occupancy information. We consider the case where  each secondary user only has a limited view of the system, and improve each other's information by recommendation.


\rev{Our algorithm design is partially inspired by the recommendation systems in the electronic commerce industry, where analytical methods such as collaborative filtering \cite{key-10} and
multi-armed bandit process modeling \cite{key-14} are useful. However, we cannot directly apply the existing methods to analyze cognitive radio networks due to the unique congestion effect in our model.}


\section{\label{sec:Conclusion}Conclusion}
In this paper, we propose an adaptive channel recommendation scheme for efficient spectrum sharing. We formulate the problem as an average reward based Markov decision process. We first prove the existence of the optimal stationary spectrum access policy, and then characterize the structure of the optimal policy in two asymptotic cases. Furthermore, we propose a novel MRAS-based algorithm that is provably convergent to the optimal policy.  Numerical results show that our proposed algorithm outperforms the static approach in the literature by up to $18\%$ and the Q-learning method by up to $10\%$ in terms of system throughput. Our algorithm is also more robust to the channel dynamics compared to the static counterpart.

In terms of future work, we are currently extending the analysis by taking the heterogeneity of channels into consideration. We also plan to consider the case where the secondary users are selfish. Designing an incentive-compatible channel recommendation mechanism for that case will be very interesting and challenging. 

\appendix
\subsection{Proof of Lemma \ref{lemma11s}}\label{ProofLemma1}
When $S_{m}(t)=0$, this trivially holds. We focus on the case that
$S_{m}(t)=1$.

Let $\mathcal{K}_{m}=\{1,...,k_{m}(t)\}$ be the set of secondary
users accessing the channel $m$, $\tau_{m}^{i}$ be the backoff time
be generated by secondary user $i$ and $\tau_{m}^{(1)}=\min\{\tau_{m}^{i}|i\neq n,i\in\mathcal{K}_{m}\}$.
The probability that the user $n$ captures the channel $m$ is given
as\begin{eqnarray*}
Pr_{n,m} & = & P\{\tau_{m}^{(1)}>\tau_{m}^{n}\}\\
 & = & (1-\frac{\tau_{m}^{n}}{\tau_{max}})^{k_{m}(t)-1}.\end{eqnarray*}
Thus, the expected throughput of user $n$ is\begin{eqnarray*}
u_{n}(t) & = & \int_{0}^{\tau_{max}}BPr_{n,m}\frac{1}{\tau_{max}}d\tau_{m}^{n}\\
 & = & \int_{0}^{\tau_{max}}B(1-\frac{\tau_{m}^{n}}{\tau_{max}})^{k_{m}(t)-1}\frac{1}{\tau_{max}}d\tau_{m}^{n}\\
 & = & \frac{B}{k_{m}(t)}.\\
 & = & \frac{BS_{m}(t)}{k_{m}(t)}.\end{eqnarray*} \qed

\subsection{Proof of Lemma \ref{lemma12s}}\label{ProofLemma2}
Let $\Lambda_{C}$ denote the event that $C$ secondary users choose
the recommended channels, and $Pr(c_{1},...,c_{R})$ denote probability
mass function that the number of secondary users on these $R$ recommended
channels equal to $c_{1},...,c_{R}$ respectively. Given the event
$\Lambda_{C}$, we have \[
Pr(c_{1},...,c_{R}|\mbox{\ensuremath{\Lambda_{C}}})=\left(\begin{array}{c}
n\\
c_{1},...,c_{R}\end{array}\right)R^{-C},\]
which is a Multinomial mass function. By the property of Multinomial
distribution, we have\[
E[c_{m}|\Lambda_{C}]=\frac{C}{R}.\]
It follows that the expected number of users choosing a recommended
channel $m$ is\begin{eqnarray*}
E[c_{m}] & = & \sum_{C=0}^{N}E[c_{m}|\Lambda_{C}]Pr(\Lambda_{C})\\
 & = & \sum_{C=0}^{N}\frac{C}{R}\left(\begin{array}{c}
N\\
C\end{array}\right)P_{rec}^{C}(1-P_{rec})^{N-C}\\
 & = & \frac{P_{rec}N}{R}.\end{eqnarray*}
Then $E[c_{m}]=1$ requires that\[
P_{rec}=\frac{R}{N}.\]  \qed

\subsection{Derivation of Transition Probability}\label{Derivation}
When the system state transits from $R$ to $R'$, we assume that
$m_{r}$ and $m_{u}$ recommendations, out of $R'$ recommendations,
are channels that have been recommended and have not been recommended
at time slot $t$ respectively. Obviously, $m_{r}+m_{u}=R'$. We assume
that $\bar{m}_{r}$ recommended channels and $\bar{m}_{u}$ unrecommended
channels have been accessed by the secondary users at time slot $t+1$.
We thus have $R\geq\bar{m}_{r}\geq m_{r}$ and $M-R\geq\bar{m}_{u}\geq m_{u}$.
We also assume that there are $n_{r}$ secondary users have accessed
these $\bar{m}_{r}$ recommended channels and $n_{u}$ secondary users
have accessed those $\bar{m}_{u}$ unrecommended channels at time
slot $t+1$. Obviously, we have $n_{r}+n_{u}=N$ , $n_{r}\geq\bar{m}_{r}$
and $n_{u}\geq\bar{m}_{u}$.

For the first term, the probability that the user distribution $(n_{r},n_{u})$
happens follows the Binomial distribution as $\left(\begin{array}{c}
N\\
n_{r}\end{array}\right)P_{rec}^{n_{r}}(1-P_{rec})^{n_{u}}$.

For the second term, when $\bar{m}_{r}\ge1$, it is easy to check
that there are $\left(\begin{array}{c}
n_{r}-1\\
\bar{m}_{r}-1\end{array}\right)$ ways for $n_{r}$ secondary users to choose $\bar{m}_{r}$ recommended
channels and there are $\frac{R!}{(R-\bar{m}_{r})!}$ possibilities
for these $\bar{m}_{r}$ recommended channels out of the $R$ recommended
channels, each of which has probability $(\frac{1}{R})^{n_{r}}$.
Among these $\bar{m}_{r}$ recommended channels that have been accessed
by the secondary users, the probability that $m_{r}$ channels turn
out to be idle is given as $\left(\begin{array}{c}
\bar{m}_{r}\\
m_{r}\end{array}\right)(1-q)^{m_{r}}q^{\bar{m}_{r}-m_{r}}$. When $\bar{m}_{r}=0$, it requires that $u_{r}=0$. Thus, we define
\[
\left(\begin{array}{c}
n_{r}-1\\
-1\end{array}\right)=\begin{cases}
1 & \mbox{If \ensuremath{n_{r}}=0,}\\
0 & \mbox{Otherwise.}\end{cases}\]

Similarly, we can obtain the third term for the unrecommended channels
case.

\subsection{Lemma 5}\label{proof_for_lemma1}
Since the operation $\sum_{R'\in\mathcal{R}}P{}_{R,R'}^{P_{rec}}[\cdot]$
plays a key role in the Bellman equation, to facilitate the study,
we first define the following function \begin{eqnarray*}
f_{r}(R,P_{rec}) & \triangleq & \sum_{i=r}^{\min\{M,N\}}P{}_{R,i}^{P_{rec}},\forall r\in\mathcal{R}.\end{eqnarray*}
Since\begin{eqnarray*}
 &  & f_{r}(R,P_{rec})\\
 &  & =Pr(R(t+1)\geq r|R(t)=R,P_{rec}(t)=P_{rec})\\
 &  & =1-Pr(R(t+1)<r|R(t)=R,P_{rec}(t)=P_{rec}),\end{eqnarray*}
We call the function $f_{r}(R,P_{rec})$ as \emph{the reverse cumulative distribution
function} in the sequel.

\begin{lem}
\label{lem:nondecreaseing} When $M=\infty$ and $N<\infty$,
the reverse cumulative distribution function $f_{r}(R,P_{rec})$ is
nondecreasing in $R$ for all $r,R\in\mathcal{R}$, $P_{rec}\in\mathcal{P}$.\end{lem}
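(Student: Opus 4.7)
The plan is to prove stochastic dominance: conditional on the same action $P_{rec}$, the next-state random variable $R(t+1)$ given $R(t) = R+1$ stochastically dominates $R(t+1)$ given $R(t) = R$. Because $f_{r}(R,P_{rec}) = \Pr(R(t+1) \ge r \mid R(t) = R, P_{rec})$, this immediately yields the monotonicity claimed in the lemma. The natural tool is a coupling argument that exploits the explicit decomposition $R' = m_{r} + m_{u}$ used in the derivation of the transition probability in \eqref{eq:2565}.

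First, I would pin down what the $M=\infty$ asymptotics buy us. In the transition formula, the factor governing the unrecommended channels is $\tfrac{(M-R)!}{(M-R-\bar m_u)!}(M-R)^{-n_u}$, and as $M\to\infty$ this collapses to the indicator $\bar m_u = n_u$: a finite number of users allocated to an unbounded pool of unrecommended channels almost surely pick distinct ones. Thus $m_{u}$ is simply a $\mathrm{Binomial}(n_u,\, p/(p+q))$, whose law depends on $R$ only through $n_u$, and $n_u=N-n_r$ is unchanged if we share the common randomness of $n_r \sim \mathrm{Binomial}(N,P_{rec})$ across the two systems.

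Next I would construct the coupling on the recommended side. In the $(R+1)$-system, let the $n_r$ users who target recommended channels draw $X_1,\ldots,X_{n_r}$ i.i.d.\ uniform on $\{1,\ldots,R+1\}$. In the $R$-system, set $Y_i = X_i$ whenever $X_i \le R$, and otherwise draw $Y_i$ uniformly on $\{1,\ldots,R\}$ independently. The $Y_i$ are uniform on $\{1,\ldots,R\}$ as required. Write $\bar m_r^{(R+1)} = |\{X_i\}|$ and $\bar m_r^{(R)} = |\{Y_i\}|$. The key pathwise inequality is
\[
\bar m_r^{(R+1)} \;\ge\; \bar m_r^{(R)},
\]
which I would justify by a short case analysis on the set $S = \{i : X_i = R+1\}$: if $S=\emptyset$ the two sets are identical; otherwise the extra label $R+1$ contributes exactly one to $\bar m_r^{(R+1)}$, whereas in the $R$-system the reassigned values $\{Y_i : i\in S\}$ contribute at most one new element beyond $\{X_j : j\notin S\}$. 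Once this is in hand, couple $m_r$ in both systems as a $\mathrm{Binomial}(\bar m_r,1-q)$ via the standard thinning (keep a survival indicator for each of the at most $\min(R,R+1)$ potential idle-channel slots); since $\bar m_r^{(R+1)} \ge \bar m_r^{(R)}$ pointwise, we get $m_r^{(R+1)} \ge m_r^{(R)}$ pointwise.

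Finally, since $m_u$ has the same conditional law in both systems given $n_u$, we obtain $R'^{(R+1)} = m_r^{(R+1)} + m_u \ge m_r^{(R)} + m_u = R'^{(R)}$ almost surely under the coupling. Taking probabilities of the event $\{R' \ge r\}$ gives $f_{r}(R+1,P_{rec}) \ge f_{r}(R,P_{rec})$, which is the desired conclusion. The main obstacle is the sample-path inequality $\bar m_r^{(R+1)} \ge \bar m_r^{(R)}$; everything else is either an appeal to the stochastic monotonicity of the Binomial in its trial count or an unconditioning step. The $M=\infty$ assumption is what makes the unrecommended-channel contribution independent of $R$, and hence the whole argument go through cleanly—without it, $\bar m_u$ would depend on $R$ through the surjection statistics and would require a separate and more delicate coupling.
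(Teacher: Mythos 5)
Your overall strategy (establish first-order stochastic dominance of the next state in $R$, which is exactly the statement $f_r(R+1,P_{rec})\ge f_r(R,P_{rec})$) is a legitimate and genuinely different route from the paper, which instead argues by induction on the number of users $N$: an explicit base case $N=2$ plus a recursion obtained by conditioning on whether a tagged user produces a recommendation. Your reduction of the $M=\infty$ case is also right: the unrecommended-channel contribution degenerates to $\bar m_u=n_u$, so $m_u\sim\mathrm{Binomial}(n_u,p/(p+q))$ depends on $R$ only through the shared draw $n_r\sim\mathrm{Binomial}(N,P_{rec})$, and the binomial-thinning step for $m_r$ given $\bar m_r$ is fine.

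However, the step you yourself flag as the main obstacle is where the proof breaks: the claimed pathwise inequality $\bar m_r^{(R+1)}\ge\bar m_r^{(R)}$ is \emph{false} under your coupling. The assertion that the reassigned values $\{Y_i: i\in S\}$ contribute at most one new element beyond $\{X_j: j\notin S\}$ fails whenever $|S|\ge 2$, because the redraws are independent and can scatter over distinct channels. Concretely, take $R=2$, $n_r=2$, $X_1=X_2=3$: then $\bar m_r^{(R+1)}=1$, while with positive probability $Y_1=1,Y_2=2$ gives $\bar m_r^{(R)}=2$. So the almost-sure ordering of $R'$ under this coupling, and hence the conclusion as written, does not follow. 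The underlying distributional fact you need — that the number of occupied boxes when $n_r$ balls are thrown uniformly into $K$ boxes is stochastically nondecreasing in $K$ — is true, but it requires a different construction; for instance, reveal the users one at a time and couple the two occupancy counts as Markov chains (the count increments with probability $1-d/K$ from level $d$, which is larger for larger $K$ whenever the counts are equal), preserving the ordering inductively. With that repair, plus the observation that binomial thinning and adding the independent $m_u$ preserve stochastic order, your argument goes through; note also that the boundary transition $R=0\to R=1$ should be checked separately since state $R=0$ is handled by a special convention in the transition probabilities.
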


\emph{proof:} We prove the result by induction argument. In abuse of notation, we
denote the transition probability $P{}_{R,R'}^{P_{rec}}$ and the
reverse cumulative distribution function $f_{r}(R,P_{rec})$ when
the number of users $N=k$ as $P_{R,R'}^{P_{rec}}(k)$ and $f_{r}^{k}(R,P_{rec})$
respectively.

When $N=2$, from (\ref{eq:2565}), we have\begin{eqnarray*}
P_{0,0}^{P_{rec}}(2) & = & P_{rec}^{2}+(1-P_{rec})^{2}(\frac{q}{p+q})^{2}\\
 &  & + 2P_{rec}(1-P_{rec})\frac{q}{p+q},\end{eqnarray*}
\[
P_{0,1}^{P_{rec}}(2)=(1-P_{rec})^{2}\frac{2pq}{(p+q)^{2}}+2P_{rec}(1-P_{rec})\frac{p}{p+q},\]
\[
P_{0,2}^{P_{rec}}(2)=(1-P_{rec})^{2}(\frac{p}{p+q})^{2},\]
\begin{eqnarray*}
P_{1,0}^{P_{rec}}(2) & = & P_{rec}^{2}q+(1-P_{rec})^{2}(\frac{q}{p+q})^{2}\\
 & & + 2P_{rec}(1-P_{rec})\frac{q^{2}}{p+q},\end{eqnarray*}
\begin{eqnarray*}
P_{1,1}^{P_{rec}}(2) & = & P_{rec}^{2}(1-q)+(1-P_{rec})^{2}\frac{2pq}{(p+q)^{2}}\\
 & & + 2P_{rec}(1-P_{rec})\frac{(1-q)q+pq}{p+q},\end{eqnarray*}
\[
P_{1,2}^{P_{rec}}(2)=(1-P_{rec})^{2}(\frac{p}{p+q})^{2}+2P_{rec}(1-P_{rec})\frac{(1-q)p}{p+q},\]
\begin{eqnarray*}
P_{2,0}^{P_{rec}}(2) & = & P_{rec}^{2}\frac{q+q^{2}}{2}+(1-P_{rec})^{2}(\frac{q}{p+q})^{2}\\
 & & + 2P_{rec}(1-P_{rec})\frac{q^{2}}{p+q},\end{eqnarray*}
\begin{eqnarray*}
P_{2,1}^{P_{rec}}(2) & = & P_{rec}^{2}\frac{1-q+(1-q)q}{2}+(1-P_{rec})^{2}\frac{2pq}{(p+q)^{2}}\\
 & & + 2P_{rec}(1-P_{rec})\frac{(1-q)q+pq}{p+q},\end{eqnarray*}
\begin{eqnarray*}
P_{2,2}^{P_{rec}}(2) & = & P_{rec}^{2}\frac{(1-q)^{2}}{2}+(1-P_{rec})^{2}(\frac{p}{p+q})^{2}\\
 & & + 2P_{rec}(1-P_{rec})\frac{(1-q)p}{p+q}.\end{eqnarray*}
It is easy to check the following holds\begin{eqnarray*}
P_{0,0}^{P_{rec}}(2) & \geq & P_{1,0}^{P_{rec}}(2)\geq P_{2,0}^{P_{rec}}(2),\\
P_{2,2}^{P_{rec}}(2) & \geq & P_{1,2}^{P_{rec}}(2)\geq P_{0,2}^{P_{rec}}(2).\end{eqnarray*}
Since \[
\sum_{i=0}^{2}P_{0,i}^{P_{rec}}(2)=\sum_{i=0}^{2}P_{1,i}^{P_{rec}}(2)=\sum_{i=0}^{2}P_{2,i}^{P_{rec}}(2)=1,\]
we thus obtain\[
f_{r}^{2}(R+1,P_{rec})\geq f_{r}^{2}(R,P_{rec}),\forall R,r\in\mathcal{R},P_{rec}\in\mathcal{P},\]
i.e. $f_{r}(R,P_{rec})$ is nondecreasing in $R$ for the case $N=2$.

We then assume that $f_{r}(R,P_{rec})$ is nondecreasing in $R$ for
all $R\in\mathcal{R}$, $P_{rec}\in\mathcal{P}$ for the case that
$N=k\geq2$ i.e.\[
f_{r}^{k}(R+1,P_{rec})\geq f_{r}^{k}(R,P_{rec}),\forall R,r\in\mathcal{R},P_{rec}\in\mathcal{P}.\]
We next prove that $f_{r}(R,P_{rec})$ is nondecreasing for the case
the $N=k+1$ under this hypothesis.

Let $\psi$ denote the event that one arbitrary user out of these
$k+1$ users, does not generate a recommendation at time slot $t+1$.
Obviously,\[
Pr(\psi)=P_{rec}q+(1-P_{rec})\frac{q}{p+q},\]
which depends on $P_{rec}$ and the channel environment only. By conditioning
on the event $\varphi$, we have\begin{eqnarray}
P{}_{R+1,i}^{P_{rec}}(k+1) & = & P{}_{R+1,i-1}^{P_{rec}}(k)[1-Pr(\psi)]\nonumber \\
 & & + P{}_{R+1,i}^{P_{rec}}(k)Pr(\psi),\label{eq:259}\\
P{}_{R,i}^{P_{rec}}(k+1) & = & P{}_{R,i-1}^{P_{rec}}(k)[1-Pr(\psi)].\nonumber \\
 & & + P{}_{R,i}^{P_{rec}}(k)Pr(\psi)\label{eq:259-1}\end{eqnarray}
Thus, \begin{eqnarray}
 &  & f_{r}^{k+1}(R+1,P_{rec})-f_{r}^{k+1}(R,P_{rec})\nonumber \\
 & = & \sum_{i=r}^{k+1}P{}_{R+1,i}^{P_{rec}}(k+1)-\sum_{i=r}^{k+1}P{}_{R,i}^{P_{rec}}(k+1)\nonumber \\
 & = & [\sum_{i=r}^{k+1}P{}_{R+1,i-1}^{P_{rec}}(k)-\sum_{i=r}^{k+1}P{}_{R,i-1}^{P_{rec}}(k)][1-Pr(\psi)]\nonumber \\
 & & + [\sum_{i=r}^{k}P{}_{R+1,i}^{P_{rec}}(k)-\sum_{i=r}^{k}P{}_{R,i}^{P_{rec}}(k)]Pr(\psi)\nonumber \\
 & = & [\sum_{j=r}^{k}P{}_{R+1,j}^{P_{rec}}(k)-\sum_{j=r}^{k}P{}_{R,j}^{P_{rec}}(k)][1-Pr(\psi)]\nonumber \\
 & & + [\sum_{i=r}^{k}P{}_{R+1,i}^{P_{rec}}(k)-\sum_{i=r}^{k}P{}_{R,i}^{P_{rec}}(k)]Pr(\psi)\nonumber \\
 & = & [f_{r-1}^{k}(R+1,P_{rec})-f_{r-1}^{k}(R,P_{rec})][1-Pr(\psi)]\nonumber \\
 & = & [f_{r}^{k}(R+1,P_{rec})-f_{r}^{k}(R,P_{rec})]Pr(\psi)\nonumber \\
 & \geq & 0.\label{eq:259-2}\end{eqnarray}
i.e. $f_{r}(R,P_{rec})$ is also nondecreasing for the case the $N=k+1$.
By the induction argument, the result holds for the case that $N\geq2$. \qed

\subsection{Lemma 6}\label{proof_for_lemma22}

\begin{lem}
\label{lem:When-,-i.e.}When $M=+\infty$ and $N<+\infty$,
the reverse cumulative distribution function $f_{r}(R,P_{rec})$ is
supermodular on $\mathcal{R}\times\mathcal{P}$.\end{lem}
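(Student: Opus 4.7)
My plan is to prove Lemma 6 by induction on the number of users $N$, in parallel with the induction used to prove Lemma 5 in Appendix D. Recall that supermodularity of $f_r$ on $\mathcal{R}\times\mathcal{P}$ is the condition that the first difference $\Delta_R f_r(R, P_{rec}) := f_r(R+1, P_{rec}) - f_r(R, P_{rec})$ is nondecreasing in $P_{rec}$ for every fixed $R$ (equivalently, when $f_r$ is smooth in $P_{rec}$, that $\partial^2 f_r / \partial R\,\partial P_{rec} \geq 0$). It suffices to establish this for every $r \in \mathcal{R}$, since the cases $r=0$ and $r = \min\{M,N\}+1$ are trivial ($f_0 \equiv 1$ and $f_{\min\{M,N\}+1} \equiv 0$).

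For the base case $N = 2$, I would reuse the explicit transition probabilities $\{P_{R,i}^{P_{rec}}(2)\}$ listed at the beginning of Appendix D, assemble the reverse CDFs $f_r^2(R, P_{rec}) = \sum_{i\geq r} P_{R,i}^{P_{rec}}(2)$, and check the mixed-difference inequality directly for $R \in \{0,1\}$ and $r \in \{1,2\}$. Each resulting expression is a low-degree polynomial in $P_{rec}$ with coefficients in $p, q$, so the check reduces to short algebra.

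For the inductive step I would exploit the one-step conditioning recursion already established in the proof of Lemma 5. Summing equations (\ref{eq:259})--(\ref{eq:259-1}) over $i \geq r$ gives
\begin{equation*}
f_r^{k+1}(R, P_{rec}) = \bigl(1 - \alpha(P_{rec})\bigr)\, f_{r-1}^{k}(R, P_{rec}) + \alpha(P_{rec})\, f_r^{k}(R, P_{rec}),
\end{equation*}
with $\alpha(P_{rec}) = P_{rec}\, q + (1 - P_{rec})\, q/(p+q)$ being the single-user non-recommendation probability. Differentiating the first difference $\Delta_R f_r^{k+1}$ in $P_{rec}$ then produces three contributions:
\begin{equation*}
\frac{\partial [\Delta_R f_r^{k+1}]}{\partial P_{rec}}
= (1-\alpha)\frac{\partial [\Delta_R f_{r-1}^{k}]}{\partial P_{rec}}
+ \alpha \frac{\partial [\Delta_R f_r^{k}]}{\partial P_{rec}}
+ \alpha'(P_{rec})\,\bigl[\Delta_R f_r^{k} - \Delta_R f_{r-1}^{k}\bigr].
\end{equation*}
By the inductive hypothesis applied to $f_{r-1}^{k}$ and $f_r^{k}$, the first two terms are nonnegative, and the factors $\alpha$ and $1-\alpha$ are strictly positive on $\mathcal{P}$.

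The main obstacle will be the last, cross term. Its prefactor $\alpha'(P_{rec}) = q(p+q-1)/(p+q)$ changes sign according to whether the channel is predominantly idle ($p+q > 1$) or predominantly busy ($p+q < 1$), and the bracket $\Delta_R f_r^{k} - \Delta_R f_{r-1}^{k} = -\Delta_R P_{R, r-1}^{P_{rec}}(k)$ is also signed. To close the induction I would strengthen the inductive hypothesis to carry along additional monotonicity information about $P_{R, r-1}^{P_{rec}}(k)$ in $R$ (a refinement of Lemma 5 applied to adjacent tails), and then bound the cross term against the two convex-combination terms using the uniform positivity of $\alpha$ and $1-\alpha$ on the feasible set. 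A cleaner alternative, which I would pursue in parallel, is a probabilistic coupling: couple two chains starting from states $R$ and $R+1$ using the same user-level randomness but driven by $P_{rec}$ and $P_{rec}' > P_{rec}$, and argue pathwise that the extra recommended channel translates into at least as many new recommendations when $P_{rec}$ is larger, which directly yields the mixed-difference inequality without passing through derivatives.
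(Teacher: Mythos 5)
Your overall route is exactly the one the paper intends: its proof of this lemma reduces supermodularity to showing that $\partial f_{r}(R,P_{rec})/\partial P_{rec}$ is nondecreasing in $R$, and then simply appeals to ``the similar procedure in the proof of Lemma \ref{lem:nondecreaseing}'', i.e.\ the same induction on $N$ through the conditioning recursion (\ref{eq:259})--(\ref{eq:259-1}) that you set up, with the same explicit $N=2$ base case. So your base case plan and your recursion $f_{r}^{k+1}=(1-\alpha)f_{r-1}^{k}+\alpha f_{r}^{k}$, $\alpha(P_{rec})=P_{rec}q+(1-P_{rec})q/(p+q)$, are the right ingredients. The genuine gap is that your inductive step is never closed, and that step carries all the content of the lemma. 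The cross term you isolate, $\alpha'(P_{rec})\bigl[\Delta_{R}f_{r}^{k}-\Delta_{R}f_{r-1}^{k}\bigr]=-\alpha'(P_{rec})\,\Delta_{R}P_{R,r-1}^{P_{rec}}(k)$, cannot be made nonnegative termwise for every $r$: since $\sum_{r'}P_{R,r'}^{P_{rec}}(k)=1$, the differences $\Delta_{R}P_{R,r-1}^{P_{rec}}(k)$ sum to zero over $r-1$, so they cannot all carry the sign opposite to $\alpha'$ (which itself flips sign with $p+q-1$) unless they all vanish. Hence your first proposed remedy --- strengthening the hypothesis to monotonicity in $R$ of the interior transition probabilities --- cannot succeed as stated; any repair must instead dominate the cross term by the two convex-combination terms, and you provide no estimate that does so. Your second remedy, the four-point coupling, is only a sentence: supermodularity is an increment-of-increments inequality, and the asserted pathwise comparison (``the extra recommended channel translates into at least as many new recommendations when $P_{rec}$ is larger'') is essentially the statement to be proved rather than an argument for it; note moreover that when $p+q>1$ a recommended channel is \emph{less} likely to be idle ($1-q$) than a fresh one ($p/(p+q)$), so the monotone intuition behind the coupling is not even directionally obvious in that regime.

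To be fair, the paper itself supplies no more detail than ``by the similar procedure,'' so the obstruction you identify marks precisely where the published argument is silent; your write-up is valuable as a diagnosis of that omission. But as a standalone proof it stops where the real work begins, so it cannot be accepted as written: either carry out the base case and produce a quantitative bound showing the cross term is absorbed by $(1-\alpha)\partial_{P_{rec}}\Delta_{R}f_{r-1}^{k}+\alpha\,\partial_{P_{rec}}\Delta_{R}f_{r}^{k}$, or construct the coupling explicitly and verify the four-point inequality on sample paths.
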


\emph{proof:} To show $f_{r}(R,P_{rec})$ is supermodular on $\mathcal{R}\times\mathcal{P}$
is equivalent to proving the following is true:\begin{equation}
\frac{\partial^{2}f_{r}(R,P_{rec})}{\partial P_{rec}\partial R}\ge0.\label{eq:2588}\end{equation}
Since $R$ is an integral variable, (\ref{eq:2588}) is equivalent
to \[
\frac{\partial f_{r}(R+1,P_{rec})}{\partial P_{rec}}-\frac{\partial f_{r}(R,P_{rec})}{\partial P_{rec}}\geq0.\]
That is, it is equivalent to showing $\frac{\partial f_{r}(R,P_{rec})}{\partial P_{rec}}$
is nondecreasing in $R$. By the similar procedure in proof of Lemma
\ref{lem:nondecreaseing}, we show this holds. \qed

\subsection{Proof of Proposition \ref{lem:When-,-i.e.II}}

We prove the proposition by induction. Suppose that the time horizon
consists of any $T$ time slots.

When $t=T$, $V_{T}(R)=U_{R}= RB$,
and the proposition is trivially true.

Now, we assume it also holds for $V_{t}(R)$ when $t=k+1,k+2,...,T.$
Let $\hat{R}$ be a system state such that $\hat{R}\geq R$. By the
hypothesis, we have $V_{k+1}(\hat{R})\geq V_{k+1}(R)$.
Let $\pi^{*}$ be the optimal policy. From the Bellman equation in
(\ref{eq:5647}), we have%
\begin{equation}
V_{k}(R)=\sum_{R'=0}^{\min\{M,N\}}P{}_{R,R'}^{\pi^{*}(R)}[U_{R'}+\beta V_{k+1}(R')],\forall R\in\mathcal{R}.\label{eq:111-1}
\end{equation}
By defining a new system state $-1$ such that
$U_{-1}+\beta V_{k+1}(-1)=0$,
we can rewrite the equation in (\ref{eq:111-1}) as\begin{eqnarray*}
V_{k}(R) & = & \sum_{R'=0}^{\min\{M,N\}}P{}_{R,R'}^{\pi^{*}(R)}\sum_{i=0}^{R'}\{[U_{i}+\beta V_{k+1}(i)]\\
 & & -[U_{i-1}+\beta V_{k+1}(i-1)]\}\\
 & = & \sum_{R'=0}^{\min\{M,N\}}\{[U_{R'}+\beta V_{k+1}(R')]\\
 & & - [U_{R'-1}+\beta V_{k+1}(R'-1)]\}\sum_{i=R'}^{\min\{M,N\}}P{}_{R,i}^{\pi^{*}(R)}.\end{eqnarray*}
By lemma $5$ in the Appendix, we have \[
\sum_{i=R'}^{\min\{M,N\}}P{}_{\hat{R},i}^{\pi^{*}(R)}\geq\sum_{i=R'}^{\min\{M,N\}}P{}_{R,i}^{\pi^{*}(R)},\forall R'\in\mathcal{R}.\]
Then\begin{eqnarray*}
V_{k}(R) & \le & \sum_{R'=0}^{\min\{M,N\}}\{[U_{R'}+\beta V_{k+1}(R')]\\
 &  & -[U_{R'-1}+\beta V_{k+1}(R'-1)]\}\sum_{i=R'}^{\min\{M,N\}}P{}_{\hat{R},i}^{\pi^{*}(R)}\\
 \end{eqnarray*}
 \begin{eqnarray*}
 & = & \sum_{R'=0}^{\min\{M,N\}}P{}_{\hat{R},R'}^{\pi^{*}(R)}[U_{R'}+\beta V_{k+1}(R')]\\
 & \le & \max_{P_{rec}\in\mathcal{P}}\sum_{R'\in\mathcal{R}}P{}_{\hat{R},R'}^{P_{rec}}[U_{R'}+\beta V_{t+1}(R')]\\
 & = & \sum_{R'=0}^{\min\{M,N\}}P{}_{\hat{R},R'}^{\pi^{*}(\hat{R})}[U_{R'}+\beta V_{k+1}(R')]\\
 & = & V_{k}(\hat{R}),\end{eqnarray*}
i.e., for $t=k$,
$V_{k}(\hat{R})\geq V_{k}(R)$
also holds. This completes the proof. \qed

\subsection{Proof of Theorem \ref{theorem1}}\label{proof_for_theorem1}

We first show that
under the reference distribution, the optimal policy is attainable.
\begin{lem}\label{thm33}
For the MRAS algorithm, the policy $\pi$ generated by the sequence
of reference distributions $\{g_{k}\}$ converges point-wisely to
the optimal spectrum access policy $\pi^{*}$ for the adaptive channel
recommendation MDP, i.e. \begin{eqnarray}
\lim_{k\rightarrow\infty}E_{g_{k}}[\pi(R)] & = & \pi(R)^{*},\forall R\in\mathcal{R},\label{eq:lemma11-1}\\
\lim_{k\rightarrow\infty}Var_{g_{k}}[\pi(R)] & = & 0,\forall R\in\mathcal{R}.\label{eq:lemma11-2}\end{eqnarray}
\end{lem}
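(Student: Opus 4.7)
The plan is to exploit the closed-form representation
\[
g_k(\pi) \;=\; \frac{e^{(k-1)\Phi_\pi}\,I_{\{\Phi_\pi \geq \gamma\}}}{\int_{\pi'\in\Omega}e^{(k-1)\Phi_{\pi'}}\,I_{\{\Phi_{\pi'}\geq\gamma\}}\,d\pi'}
\]
that was already derived in (\ref{eq:23589}). This is a Boltzmann-style density on the policy space whose effective ``temperature'' $1/(k-1)$ shrinks to $0$; such a family concentrates on the essential maxima of $\Phi_\pi$, so since the MDP is assumed to have a unique optimum $\pi^*$, it is natural to expect $g_k$ to converge weakly to the point mass $\delta_{\pi^*}$ and then read off both (\ref{eq:lemma11-1}) and (\ref{eq:lemma11-2}) as immediate consequences.

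The argument I would write would be a Laplace-style concentration bound in three steps. First, I would verify that $\Phi_\pi$ is continuous on $\Omega$: the transition probabilities $P_{R,R'}^{\pi(R)}$ are polynomials in $\pi(R)$ by (\ref{eq:2565}), and by Lemma~\ref{lemmaB} the induced chain is irreducible, so its unique invariant distribution (and hence $\Phi_\pi$) depends continuously on $\pi$. Combined with uniqueness of $\pi^*$, this gives for every $\delta>0$ a positive gap $\eta=\eta(\delta)$ such that $\Phi_\pi \leq \Phi_{\pi^*}-\eta$ whenever $\|\pi-\pi^*\|_\infty \geq \delta$ and $\Phi_\pi \geq \gamma$. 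Second, I would bound the numerator of $\Pr_{g_k}\{\|\pi-\pi^*\|_\infty\geq\delta\}$ from above by $\mathrm{vol}(\Omega)\,e^{(k-1)(\Phi_{\pi^*}-\eta)}$, and the denominator from below by $c\,e^{(k-1)(\Phi_{\pi^*}-\eta/2)}$, where $c>0$ is the volume of a neighbourhood of $\pi^*$ on which $\Phi_\pi \geq \Phi_{\pi^*}-\eta/2$ (again by continuity). Taking the ratio yields
\[
\Pr_{g_k}\{\|\pi-\pi^*\|_\infty \geq \delta\} \;\leq\; \frac{\mathrm{vol}(\Omega)}{c}\,e^{-(k-1)\eta/2} \;\longrightarrow\; 0,
\]
so $\pi$ converges in probability to $\pi^*$ under $g_k$.

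Third, since every coordinate $\pi(R)$ lives in the bounded set $(0,1)$, the bounded convergence theorem promotes convergence in probability to convergence of the first and second moments, giving (\ref{eq:lemma11-1}) directly and (\ref{eq:lemma11-2}) via
\[
\mathrm{Var}_{g_k}[\pi(R)] \;=\; E_{g_k}[\pi(R)^2] - (E_{g_k}[\pi(R)])^2 \;\longrightarrow\; (\pi^*(R))^2 - (\pi^*(R))^2 \;=\; 0.
\]

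The main obstacle is the behaviour of $\Phi_\pi$ near the boundary of $\Omega=(0,1)^{|\mathcal{R}|}$, because irreducibility in Lemma~\ref{lemmaB} is only established for $\pi(R)\in(0,1)$, so the continuity I invoke must be checked with care when $\pi^*(R)$ might equal $0$ or $1$. I would handle this by either (a) first showing that any global maximizer lies strictly in the interior, using the fact that $\gamma<\Phi_{\pi^*}$ forces the super-level set $\{\Phi_\pi\geq\gamma\}$ to have positive $|\mathcal{R}|$-dimensional Lebesgue measure in a neighbourhood of the maximizer, or (b) extending $\Phi_\pi$ continuously to the closure $[0,1]^{|\mathcal{R}|}$ through the limiting stationary distributions. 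Either route allows the lower-bound neighbourhood used in the Laplace step to be chosen inside $\Omega$, closing the argument.
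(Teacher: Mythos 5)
Your proposal is correct in substance but takes a genuinely different route from the paper. The paper never forms a tail bound: it works with the algorithm's increasing thresholds $\gamma_{k}$, argues that they stabilize at some finite $K$, shows via (\ref{eq:2354}) that the sequence $E_{g_{k}}[e^{\Phi_{\pi}}I_{\{\Phi_{\pi}\geq\gamma_{k}\}}]$ is monotone, and then proves by contradiction --- assuming the limit is some $e^{\Phi_{*}}<e^{\Phi_{\pi^{*}}}$, exhibiting a positive-measure set $\Theta$ of near-optimal policies on which $g_{k}(\pi)\rightarrow\infty$, and invoking Fatou's lemma against $\int_{\Omega}g_{k}=1$ --- that $E_{g_{k}}[e^{\Phi_{\pi}}I_{\{\Phi_{\pi}\geq\gamma_{k}\}}]\rightarrow e^{\Phi_{\pi^{*}}}$; the step from this to (\ref{eq:lemma11-1})--(\ref{eq:lemma11-2}) is then asserted rather tersely from monotonicity of $e^{\Phi_{\pi}}$ and uniqueness of $\pi^{*}$. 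You instead start from the closed form (\ref{eq:23589}) and run a Laplace-type concentration estimate: a gap $\eta(\delta)$ from continuity of $\Phi_{\pi}$ and uniqueness of $\pi^{*}$, an upper bound $\mathrm{vol}(\Omega)e^{(k-1)(\Phi_{\pi^{*}}-\eta)}$ on the tail mass, a lower bound $c\,e^{(k-1)(\Phi_{\pi^{*}}-\eta/2)}$ on the normalizer, and bounded convergence to pass to first and second moments. Your route buys an explicit exponential rate and makes the final concentration step fully rigorous, which is exactly where the paper is weakest; what it does not cover is the paper's setting in which $\gamma_{k}$ varies with $k$ (the paper's case distinction $\gamma_{K}=\Phi_{\pi^{*}}$ versus $\gamma_{K}<\Phi_{\pi^{*}}$): if the thresholds could creep up to $\Phi_{\pi^{*}}$, your neighbourhood used in the denominator bound could lose measure, so you would need to add the paper's stabilization observation (or a separate argument for that case) to treat the algorithmic thresholds rather than a fixed $\gamma<\Phi_{\pi^{*}}$. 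Both proofs share the same unaddressed delicacy at the boundary of the open policy space $\Omega$, where irreducibility (hence continuity of $\Phi_{\pi}$) is only established in the interior; you at least flag it explicitly and your fix (restricting the Laplace neighbourhood to the interior after showing the maximizer is interior, or extending $\Phi_{\pi}$ to the closed cube) is reasonable, though the extension in (b) would itself need an argument since uniqueness of the stationary distribution can fail on the boundary.
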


\emph{proof:} The proof is developed on the basis of the results in \cite{key-5}.

First, from the MRAS algorithm, we have \[
\gamma_{k}\le\gamma_{k+1},\]
i.e. the sequence $\{\gamma_{k}\}$ is monotone. Since $0\le\gamma_{k}\le\Phi_{\pi^{*}}$
is bounded, there must exist a finite $K$ such that $\gamma_{k+1}=\gamma_{k},\forall k\geq K.$

When $\gamma_{K}=\Phi_{\pi^{*}}$, we have \[
\lim_{k\rightarrow\infty}E_{g_{k}}[e^{\Phi_{\pi}}I_{\{\Phi_{\pi}\geq\gamma_{k}\}}]=e^{\Phi_{\pi^{*}}}.\]
holds.

When $\gamma_{K}<\Phi_{\pi^{*}}$, from (\ref{eq:2354}), we know that\[
E_{g_{k}}[e^{\Phi_{\pi}}I_{\{\Phi_{\pi}\geq\gamma_{k}\}}]\geq E_{g_{k-1}}[e^{\Phi_{\pi}}I_{\{\Phi_{\pi}\geq\gamma_{k}\}}],\forall k\geq K.\]
That is, the sequence $\{E_{g_{k}}[e^{\Phi_{\pi}}I_{\{\Phi_{\pi}\geq\gamma_{k}\}}]\}$
is monotone and hence converges. We then show that the limit of this sequence must be $e^{\Phi_{\pi^{*}}}$
by contradiction.

Suppose that\[
\lim_{k\rightarrow\infty}E_{g_{k}}[e^{\Phi_{\pi}}I_{\{\Phi_{\pi}\geq\gamma_{k}\}}]=e^{\Phi_{*}}<e^{\Phi_{\pi^{*}}}.\]
Define the set\[
\Theta=\{\pi:\Phi_{\pi}\geq\max\{\gamma_{K},\ln\frac{e^{\Phi_{*}}+e^{\Phi_{\pi^{*}}}}{2}\}\}.\]
Since $\gamma_{K}<\Phi_{\pi^{*}}$, the set $\Theta$ is not empty
by the continuous property over the policy space of MDP \cite{key-4}.
Note that\[
g_{k}(\pi)=\prod_{i=1}^{k}\frac{e^{\Phi_{\pi}}I_{\{\Phi_{\pi}\geq\gamma_{i}\}}g_{k-1}(\pi)}{E_{g_{i}}[e^{\Phi_{\pi}}I_{\{\Phi_{\pi}\geq\gamma_{i}\}}]}g_{1}(\pi),\]
and \[
\lim_{k\rightarrow\infty}\frac{e^{\Phi_{\pi}}I_{\{\Phi_{\pi}\geq\gamma_{_{k}}\}}}{E_{g_{k}}[e^{\Phi_{\pi}}I_{\{\Phi_{\pi}\geq\gamma_{_{k}}\}}]}=\frac{e^{\Phi_{\pi}}I_{\{\Phi_{\pi}\geq\gamma_{_{K}}\}}}{e^{\Phi_{*}}}>1,\forall\pi\in\Theta,\]
we thus have\[
\lim_{k\rightarrow\infty}g_{k}(\pi)=\infty,\forall\pi\in\Theta.\]
By Fatou's lemma, we have\begin{eqnarray*}
 &  & \lim_{k\rightarrow\infty}\inf\int_{\pi\in\Omega}g_{k}(\pi)d\pi\\
 & = & 1\\
 & \geq & \lim_{k\rightarrow\infty}\inf\int_{\pi\in\Theta}g_{k}(\pi)d\pi\\
 & \geq & \int_{\pi\in\Theta}\lim_{k\rightarrow\infty}\inf g_{k}(\pi)d\pi\\
 & = & \infty,\end{eqnarray*}
which forms a contradiction. Hence, we have\[
\lim_{k\rightarrow\infty}E_{g_{k}}[e^{\Phi_{\pi}}I_{\{\Phi_{\pi}\geq\gamma_{k}\}}]=e^{\Phi_{\pi^{*}}}.\]

Since $e^{\Phi_{\pi}}I_{\{\Phi_{\pi}\geq\gamma\}}$ is a monotone
function of $\Phi_{\pi}$ and one-to-one map over the field $\{\pi:\Phi_{\pi}\geq\gamma\}$,
the result above implies that\begin{eqnarray}
\lim_{k\rightarrow\infty}E_{g_{k}}[\pi] & = & \pi^{*},\label{eq:lemma11-1}\\
\lim_{k\rightarrow\infty}Var_{g_{k}}[\pi] & = & \boldsymbol{0}.\label{eq:lemma11-2}\end{eqnarray} \qed

To complete the proof of the theorem, we next show that \[
E_{g_{k}}[\pi(R)]=E_{f(\pi,\boldsymbol{\mu},\boldsymbol{\sigma})}[\pi(R)],\forall R\in\mathcal{R},\]
\[E_{g_{k}}[\pi^{2}(R)]=E_{f(\pi,\boldsymbol{\mu},\boldsymbol{\sigma})}[\pi^{2}(R)],\forall R\in\mathcal{R}.\]
For the sake of simplicity, we first define a function \[
H(\boldsymbol{\mu},\boldsymbol{\sigma},\gamma_{k})\triangleq\int_{\pi\in\Omega}e^{(k-1)\Phi_{\pi}}I_{\{\Phi_{\pi}\geq\gamma_{k}\}}\ln f(\pi,\boldsymbol{\mu},\boldsymbol{\sigma})d\pi.\]
Since \begin{eqnarray*}
f(\pi,\boldsymbol{\mu},\boldsymbol{\sigma}) & = & \prod_{R=0}^{\min\{M,N\}}f(\pi(R),\mu_{R},\sigma_{R})\\
 & = & \prod_{R=0}^{\min\{M,N\}}\frac{1}{\sqrt{2pi\sigma_{R}^{2}}}e^{-\frac{(\pi(R)-\mu_{R})^{2}}{2\sigma_{R}^{2}}},\\
 & = & \prod_{R=0}^{\min\{M,N\}}e^{\frac{\mu_{R}\pi(R)}{\sigma_{R}}-\frac{\mu_{R}^{2}}{2\sigma_{R}}}\frac{1}{\sqrt{2pi\sigma_{R}^{2}}}e^{-\frac{\pi(R)^{2}}{2\sigma_{R}^{2}}}\\
 & = & \prod_{R=0}^{\min\{M,N\}}e^{\frac{\mu_{R}\pi(R)}{\sigma_{R}}-\frac{\mu_{R}^{2}}{2\sigma_{R}}}f(\pi(R),0,\sigma_{R})\\
 & = & \prod_{R=0}^{\min\{M,N\}}[e^{\frac{\mu_{R}\pi(R)}{\sigma_{R}}}f(\pi(R),0,\sigma_{R})\\
 &  & \cdot\int_{\pi(R)\in\mathcal{P}}\frac{\mu_{R}\pi(R)}{\sigma_{R}}f(\pi(R),0,\sigma_{R})d\pi(R)],\end{eqnarray*}
we then obtain\begin{eqnarray*}
 &  & H(\boldsymbol{\mu},\boldsymbol{\sigma},\gamma_{k})\\
 & = & \sum_{R=0}^{\min\{M,N\}}\int_{\pi\in\Omega}e^{(k-1)\Phi_{\pi}}I_{\{\Phi_{\pi}\geq\gamma_{k}\}}\frac{\mu_{R}\pi(R)}{\sigma_{R}}d\pi\\
 & & + \sum_{R=0}^{\min\{M,N\}}\int_{\pi\in\Omega}e^{(k-1)\Phi_{\pi}}I_{\{\Phi_{\pi}\geq\gamma_{k}\}}\ln f(\pi(R),0,\sigma_{R})d\pi\\
 & & - \sum_{R=0}^{\min\{M,N\}}\{\int_{\pi\in\Omega}e^{(k-1)\Phi_{\pi}}I_{\{\Phi_{\pi}\geq\gamma_{k}\}}\\
 &  & \cdot\ln[\int_{\pi(R)\in\mathcal{P}}\frac{\mu_{R}\pi(R)}{\sigma_{R}}f(\pi(R),0,\sigma_{R})d\pi(R)]d\pi\}.\end{eqnarray*}
Since the optimization problem in (\ref{eq:235-1}) is to solve\[
\max_{\boldsymbol{\mu},\boldsymbol{\sigma}}H(\boldsymbol{\mu},\boldsymbol{\sigma},\gamma_{k}),\]
the updated parameters ($\boldsymbol{\mu}_{k},\boldsymbol{\sigma}_{k}$)
thus maximizes $H(\boldsymbol{\mu},\boldsymbol{\sigma},\gamma_{k})$.
It means that \[
\nabla H(\boldsymbol{\mu}_{k},\boldsymbol{\sigma}_{k},\gamma_{k})=0.\]
That is \begin{eqnarray*}
 &  & \nabla H(\boldsymbol{\mu},\boldsymbol{\sigma},\gamma_{k})\\
 & = & \frac{\int_{\pi(R)\in\mathcal{P}}e^{\frac{\mu_{R}\pi(R)}{\sigma_{R}}}f(\pi(R),0,\sigma_{R})\frac{\pi(R)}{\sigma_{R}^{2}}d\pi(R)}{\int_{\pi(R)\in\mathcal{P}}e^{\frac{\mu_{R}\pi(R)}{\sigma_{R}}}f(\pi(R),0,\sigma_{R})d\pi(R)}\\
 &  & \cdot\int_{\pi\in\Omega}e^{(k-1)\Phi_{\pi}}I_{\{\Phi_{\pi}\geq\gamma_{k}\}}d\pi\\
 & & - \int_{\pi\in\Omega}e^{(k-1)\Phi_{\pi}}I_{\{\Phi_{\pi}\geq\gamma_{k}\}}\frac{\pi(R)}{\sigma_{R}^{2}}d\pi,\\
 & = & 0.\end{eqnarray*}
It follows that\begin{eqnarray*}
 &  & \frac{\int_{\pi\in\Omega}e^{(k-1)\Phi_{\pi}}I_{\{\Phi_{\pi}\geq\gamma_{k}\}}\pi(R)d\pi}{\int_{\pi\in\Omega}e^{(k-1)\Phi_{\pi}}I_{\{\Phi_{\pi}\geq\gamma_{k}\}}d\pi}\\
 & =\\
 &  & \frac{\int_{\pi(R)\in\mathcal{P}}e^{\frac{\mu_{R}\pi(R)}{\sigma_{R}}}f(\pi(R),0,\sigma_{R})\pi(R)d\pi(R)}{\int_{\pi(R)\in\mathcal{P}}e^{\frac{\mu_{R}\pi(R)}{\sigma_{R}}}f(\pi(R),0,\sigma_{R})d\pi(R)},\forall R\in\mathcal{R}.\end{eqnarray*}
By multiplying the same constant on the numerator and denominator
of the terms on both sides, we have\begin{eqnarray*}
 &  & \frac{\int_{\pi\in\Omega}\frac{e^{(k-1)\Phi_{\pi}}I_{\{\Phi_{\pi}\geq\gamma_{k}\}}g_{k-1}(\pi)}{E_{g_{k-1}}[e^{\Phi_{\pi}}I_{\{\Phi_{\pi}\geq\gamma\}}]}\pi(R)d\pi}{\int_{\pi\in\Omega}\frac{e^{(k-1)\Phi_{\pi}}I_{\{\Phi_{\pi}\geq\gamma_{k}\}}g_{k-1}(\pi)}{E_{g_{k-1}}[e^{\Phi_{\pi}}I_{\{\Phi_{\pi}\geq\gamma\}}]}d\pi}\\
 & =\\
 &  & \frac{\int_{\pi(R)\in\mathcal{P}}f(\pi(R),\mu_{R},\sigma_{R})\pi(R)d\pi(R)}{\int_{\pi(R)\in\mathcal{P}}f(\pi(R),\mu_{R},\sigma_{R})d\pi(R)},\forall R\in\mathcal{R},\end{eqnarray*}
Since \begin{eqnarray*}
 &  & \int_{\pi(R)\in\mathcal{P}}f(\pi(R),\mu_{R},\sigma_{R})d\pi(R)\\
 & = & \int_{\pi\in\Omega}\frac{e^{(k-1)\Phi_{\pi}}I_{\{\Phi_{\pi}\geq\gamma_{k}\}}g_{k-1}(\pi)}{E_{g_{k-1}}[e^{\Phi_{\pi}}I_{\{\Phi_{\pi}\geq\gamma\}}]}d\pi\\
 & = & 1,\end{eqnarray*}
we obtain\begin{eqnarray*}
 &  & \int_{\pi\in\Omega}\frac{e^{(k-1)\Phi_{\pi}}I_{\{\Phi_{\pi}\geq\gamma_{k}\}}g_{k-1}(\pi)}{E_{g_{k-1}}[e^{\Phi_{\pi}}I_{\{\Phi_{\pi}\geq\gamma\}}]}\pi(R)d\pi\\
 & = & \int_{\pi(R)\in\mathcal{P}}f(\pi(R),\mu_{R},\sigma_{R})\pi(R)d\pi(R),\forall R\in\mathcal{R},\end{eqnarray*}
i.e.\[
E_{g_{k}}[\pi(R)]=E_{f(\pi,\boldsymbol{\mu},\boldsymbol{\sigma})}[\pi(R)],\forall R\in\mathcal{R}.\]

Similarly, we can show that\[
E_{g_{k}}[\pi^{2}(R)]=E_{f(\pi,\boldsymbol{\mu},\boldsymbol{\sigma})}[\pi^{2}(R)],\forall R\in\mathcal{R}.\]

From (\ref{eq:lemma11-1}), it follows that\begin{eqnarray*}
\lim_{k\rightarrow\infty}E_{f(\pi,\boldsymbol{\mu}_{k},\boldsymbol{\sigma}_{k})}[\pi] & = & \lim_{k\rightarrow\infty}E_{g_{k}}[\pi]\\
 & = & \pi^{*}.\end{eqnarray*}
and,
\begin{eqnarray*}
 &   & \lim_{k\rightarrow\infty}Var_{f(\pi,\boldsymbol{\mu},\boldsymbol{\sigma})}[\pi(R)]\\
 & = & \lim_{k\rightarrow\infty}\{E_{f(\pi,\boldsymbol{\mu},\boldsymbol{\sigma})}[\pi^{2}(R)]-E_{f(\pi,\boldsymbol{\mu},\boldsymbol{\sigma})}[\pi(R)]^{2}\}\\
 & = & \lim_{k\rightarrow\infty}\{E_{g_{k}}[\pi^{2}(R)]-E_{g_{k}}[\pi(R)]^{2}\}\\
 & = & \lim_{k\rightarrow\infty}Var_{g_{k}}[\pi(R)]\\
 & = & 0.\end{eqnarray*} \qed
\bibliographystyle{ieeetran}
\bibliography{AdaptiveRec}


\end{document}